\documentclass[aps,prl,twocolumn,superscriptaddress]{revtex4-1}
\usepackage{graphicx}
\usepackage{subfigure}
\usepackage{epstopdf}
\usepackage{amsmath}
\usepackage{amssymb}
\usepackage{amsfonts}
\usepackage{mathrsfs}
\usepackage{bbm}
\usepackage{url}
\usepackage[T1]{fontenc}
\usepackage{csquotes}
\MakeOuterQuote{"}

\usepackage{algorithm}
\usepackage{algorithmicx}
\usepackage{algpseudocode}

\usepackage{dcolumn}
\usepackage{color}
\usepackage{graphicx}
\usepackage{subfigure}
\usepackage{epstopdf}

\usepackage{amsmath}
\usepackage{amssymb}
\usepackage{amsfonts}
\usepackage{amsthm}
\usepackage{mathrsfs}
\usepackage{bbm}

\newtheorem{lemma}{Lemma}

\usepackage{url}
\usepackage[T1]{fontenc}
\usepackage{csquotes}

\usepackage{algorithm}
\usepackage{algorithmicx}
\usepackage{algpseudocode}

\usepackage{dcolumn}
\usepackage{color}
%\usepackage[square,sort,comma,numbers]{natbib}

%\definecolor{ngreen}{rgb}{0.2,0.6,0.2}

%\definecolor{ngold}{rgb}{0.7,0.6,0.2}

%%%%%%%%%%%%%%%%%%%%%%%%%%%%%%%%%%%%%%%%%%%%%%%%%%%%%%%%%%%%%%%%%%%%%%%%%%%%%%
%% define mathematical words via abbreviations.

\def\vec#1{\boldsymbol{#1}} %% overiding the original command

\newcommand{\tr}{\operatorname{tr}}

\newcommand{\caA}{\mathcal{A}}
\newcommand{\caP}{\mathcal{P}}

\newcommand{\rmH}{\mathrm{H}}
\newcommand{\rmT}{\mathrm{T}}
\newcommand{\rmV}{\mathrm{V}}

\newcommand{\Sym}{\mathrm{Sym}}

\newcommand{\scrA}{\mathscr{A}}
\newcommand{\scrB}{\mathscr{B}}

\newcommand{\scrE}{\mathscr{E}}

\newcommand{\scrK}{\mathscr{K}}

\newcommand{\caH}{\mathcal{H}}

\newcommand{\bbone}{\mathbbm{1}}
\newcommand{\bbW}{\mathbb{W}}

%%%%%%%%%%%%%%%%%%%%%%%%%%%%%%%%%%%%%%%%%%%%%%%%%%%%%%%%%%%%%%%%%%%%%%%%%%%%%%%%

\newcommand{\be}{\begin{equation}}
\newcommand{\ee}{\end{equation}}
\newcommand{\ba}{\begin{align}}
\newcommand{\ea}{\end{align}}

%%%%%%%%%%%%%%%%%%%%%%%%%%%%%%%%%%%%%%%%%%%%%%%%%%%%%%%%%%%%%%%%%%%%%%%%%%%%%%%%%
\def\<{\langle}  %% overiding the original command \<
\def\>{\rangle}  %% overiding the original command \>

       %% overiding the original command \outer

%%%%%%%%%%%%%%%%%%%%%%%%%%%%%%%%%%%%%%%%%%%%%%%%%%%%%%%%%%%%%%%%%%%%%%%%%%%%%%%%%%%
%% Abbreviations used in quantum estimation theory

%%%%%%%%%%%%%%%%%%%%%%%%%%%%%%%%%%%%%%%%%%%%%%%%%%%%%%%%%%%%%%%%%%%%%%%%%%%%%%%%%%%%%%%%%%%%%
%% Abbreviations used in cross references and citations

%\def\eqref#1{(\ref{#1})}    %% overiding the original command \eqref
%\newcommand{\eref}[1]{Eq.~(\ref{#1})}
%\newcommand{\Eref}[1]{Equation~(\ref{#1})}
%\newcommand{\esref}[1]{Eqs.~(\ref{#1})}
%\newcommand{\Esref}[1]{Equations~(\ref{#1})}

%\def\eqref#1{\textup{(}\ref{#1}\textup{)}}  %% overiding the original command \eqref
%\newcommand{\eref}[1]{Eq.~\textup{(}\ref{#1}\textup{)}}
%\newcommand{\Eref}[1]{Equation~\textup{(}\ref{#1}\textup{)}}
%\newcommand{\esref}[1]{Eqs.~\textup{(}\ref{#1}\textup{)}}
%\newcommand{\Esref}[1]{Equations~\textup{(}\ref{#1}\textup{)}}

\def\eqref#1{\textup{(\ref{#1})}}  %% overiding the original command \eqref
\newcommand{\eref}[1]{Eq.~\textup{(\ref{#1})}}

\newcommand{\fref}[1]{Fig.~\ref{#1}}

\newcommand{\lref}[1]{Lemma~\ref{#1}}

\newcommand{\cref}[1]{Conjecture~\ref{#1}}
\newcommand{\Cref}[1]{Conjecture~\ref{#1}}

\definecolor{ngreen}{rgb}{0.2,0.6,0.2}
\definecolor{ngold}{rgb}{0.7,0.6,0.2}

%%%%%%%%%%%%%%%%%%%%%%%%%%%%%%%%%%%%%%%%%%%%%%%%%%%%%%%%%%%%%%%%%%%%%%%%%%%%%%
%% define mathematical words via abbreviations.

\def\vec#1{\boldsymbol{#1}} %% overiding the original command

%%%%%%%%%%%%%%%%%%%%%%%%%%%%%%%%%%%%%%%%%%%%%%%%%%%%%%%%%%%%%%%%%%%%%%%%%%%%%%%%

%%%%%%%%%%%%%%%%%%%%%%%%%%%%%%%%%%%%%%%%%%%%%%%%%%%%%%%%%%%%%%%%%%%%%%%%%%%%%%%%%
\def\<{\langle}  %% overiding the original command \<
\def\>{\rangle}  %% overiding the original command \>
       %% overiding the original 

%%%%%%%%%%%%%%%%%%%%%%%%%%%%%%%%%%%%%%%%%%%%%%%%%%%%%%%%%%%%%%%%%%%%%%%%%%%%%%%%%%%%%%%%%%%%%
%% Abbreviations used in cross references and citations

\def\eqref#1{\textup{(\ref{#1})}}  %% overiding the original command \eqref

\newcommand{\rcite}[1]{Ref.~\cite{#1}}
\newcommand{\rscite}[1]{Refs.~\cite{#1}}

%%%%%
\begin{document}
\title{Experimental Realization of Genuine Three-copy Collective Measurements for Optimal Information Extraction}

	\author{Kai Zhou} 
	\affiliation{Laboratory of Quantum Information, University of Science and Technology of China, Hefei 230026, People's Republic of China}
	\affiliation{CAS Center For Excellence in Quantum Information and Quantum Physics, University of Science and Technology of China, Hefei 230026, People's Republic of China}

  \author{Changhao Yi}
    \affiliation{State Key Laboratory of Surface Physics, Department of Physics, and Center for Field Theory and Particle Physics, Fudan University, Shanghai 200433, China}
    \affiliation{Institute for Nanoelectronic Devices and Quantum Computing, Fudan University, Shanghai 200433, China}
    \affiliation{Shanghai Research Center for Quantum Sciences, Shanghai 201315, China}

    	\author{Wen-Zhe Yan} 
	\affiliation{Laboratory of Quantum Information, University of Science and Technology of China, Hefei 230026, People's Republic of China}
	\affiliation{CAS Center For Excellence in Quantum Information and Quantum Physics, University of Science and Technology of China, Hefei 230026, People's Republic of China}
 
	\author{Zhibo Hou} 
	\email{houzhibo@ustc.edu.cn}
	\affiliation{Laboratory of Quantum Information, University of Science and Technology of China, Hefei 230026, People's Republic of China}
	\affiliation{CAS Center For Excellence in Quantum Information and Quantum Physics, University of Science and Technology of China, Hefei 230026, People's Republic of China}
 \affiliation{Hefei National Laboratory, Hefei 230088, People's Republic of China}
 
    \author{Huangjun~Zhu}
    \email{zhuhuangjun@fudan.edu.cn}
    \affiliation{State Key Laboratory of Surface Physics, Department of Physics, and Center for Field Theory and Particle Physics, Fudan University, Shanghai 200433, China}
    \affiliation{Institute for Nanoelectronic Devices and Quantum Computing, Fudan University, Shanghai 200433, China}
    \affiliation{Shanghai Research Center for Quantum Sciences, Shanghai 201315, China}
     \affiliation{Hefei National Laboratory, Hefei 230088, People's Republic of China}
    
	\author{Guo-Yong Xiang}
	\email{gyxiang@ustc.edu.cn}
	\affiliation{Laboratory of Quantum Information, University of Science and Technology of China, Hefei 230026, People's Republic of China}
	\affiliation{CAS Center For Excellence in Quantum Information and Quantum Physics, University of Science and Technology of China, Hefei 230026, People's Republic of China}
 \affiliation{Hefei National Laboratory, Hefei 230088, People's Republic of China}
 
	\author{Chuan-Feng Li}
	\affiliation{Laboratory of Quantum Information, University of Science and Technology of China, Hefei 230026, People's Republic of China}
	\affiliation{CAS Center For Excellence in Quantum Information and Quantum Physics, University of Science and Technology of China, Hefei 230026, People's Republic of China}
 \affiliation{Hefei National Laboratory, Hefei 230088, People's Republic of China}
 
	\author{Guang-Can Guo}
	\affiliation{Laboratory of Quantum Information, University of Science and Technology of China, Hefei 230026, People's Republic of China}
	\affiliation{CAS Center For Excellence in Quantum Information and Quantum Physics, University of Science and Technology of China, Hefei 230026, People's Republic of China}
 \affiliation{Hefei National Laboratory, Hefei 230088, People's Republic of China}

\begin{abstract}
Nonclassical phenomena tied to entangled states are the focus of foundational studies and powerful resources in many applications. By contrast, the counterparts in quantum measurements are still poorly understood. Notably, genuine multipartite nonclassicality is barely discussed, let alone its experimental realization.
Here we experimentally demonstrate the power of genuine tripartite nonclassicality in quantum measurements based on a simple estimation problem. To this end we realize an optimal genuine three-copy collective measurement via a nine-step two-dimensional photonic quantum walk with 30 elaborately designed coin operators.
Then we realize an optimal estimation protocol and achieve an unprecedented high estimation fidelity, which can beat all strategies based on restricted collective measurements by more than 11 standard deviations.  These results clearly demonstrate that genuine collective measurements can extract more information than local measurements and restricted collective measurements. Our work opens the door for exploring genuine multipartite nonclassical measurements and their power in quantum information processing.
\end{abstract}

\date{\today}
 
\maketitle

\emph{Introduction}---The cross fertilization of quantum mechanics and information theory has revolutionized the way information is perceived and processed. This revolution is closely tied to a number of intriguing nonclassical phenomena predicted by quantum theory. For example, local measurements on entangled states can produce correlations that are much stronger than what can be produced by classical means \cite{Bell64,BrunCPS14}. Such nonclassical correlations are interesting not only to foundational studies, but also to many practical tasks, such as quantum cryptography and distributed information processing.

Although nonclassicality in quantum states has been actively studied by numerous researchers, nonclassicality in quantum measurements, often described by positive operator-valued measures (POVMs) \cite{NielC00book}, is still poorly understood. A few notable exceptions are tied to collective measurements on 
 two or more independent copies of a quantum state.  Surprisingly, it turns out that collective measurements can extract much more information than local measurements, although there is no entanglement or correlations among the different copies \cite{PereW91,MassP95,BagaBGM06S,Zhu12the}. This superior information extraction capability is rooted in the nonclassicality of quantum measurements rather than quantum states. It may offer practical advantages in many tasks, such as quantum state estimation \cite{MassP95,BagaBGM06S,Zhu12the,ZhuH18U},  direction estimation \cite{GisiP99},   multiparameter estimation \cite{VidrDGJ14,Lu21incorporating,chen22information}, and quantum state discrimination \cite{PereW91,higgins11multiple,Martines21quantum,conlon2023discriminating,Tian24minimum}. Recently, the nonclassicality of collective measurements has been experimentally demonstrated on photonic \cite{hou2018deterministic,experimental20tang,Parniak18beating,Wu20minimizing,Tian24minimum}, ion-trap \cite{Conlon23approaching}, and superconducting platforms \cite{Conlon23approaching,conlon2023discriminating}. Nevertheless, all these demonstrations are limited to the simplest two-copy collective measurements because of the tremendous difficulty in realizing multicopy collective measurements as pointed out in \rscite{Conlon23approaching,conlon2023discriminating}.

The situation is more complicated and interesting if we turn to the multipartite scenario, which allows a variety of nonclassical phenomena. Genuine multipartite nonclassicality is particularly appealing as it cannot be reduced to nonclassicality of fewer parties. In the case of quantum states, it is well known in the form of genuine multipartite entanglement (GME) \cite{greenberger1989going,greenberger1990bell,GUHNE2009entanglement,xie21triangle}, which cannot be reduced to entanglement among fewer parties. GME plays a central role in 
nonstatistical tests of quantum nonlocality \cite{greenberger1989going,greenberger1990bell,pan2000experimental},   quantum networking \cite{contreras21genuine}, quantum cryptography \cite{Das21universal}, and quantum metrology \cite{hyllus12fisher,Toth12multipartite}. By contrast, much less is known about genuine multipartite nonclassicality in quantum measurements and its significance in information processing. Actually, we are not even aware of a clear definition in the literature.

\begin{figure*}[htbp]
\centering\includegraphics[width=0.9\linewidth]{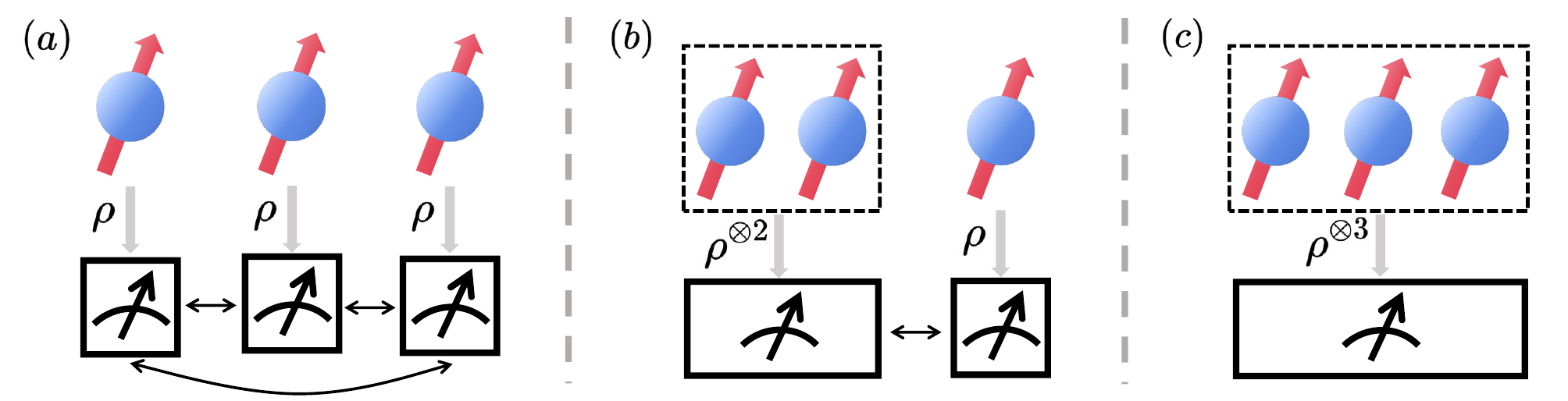}
	\caption{\label{fig:figure1}  Three types of quantum measurements on three copies of a quantum state $\rho$.  (a) Local measurements assisted by two-way classical communication (represented by double arrows). (b) 
Restricted collective measurements assisted by two-way classical communication (all such measurements are biseparable).  (c) Genuine collective measurements. 
 } 
\end{figure*}

In this work, we explore genuine tripartite nonclassicality in quantum measurements and experimentally demonstrate its information-extraction power  based on a simple estimation problem. More specifically, we consider the estimation of a random qubit pure state given three identically prepared copies. We first choose a genuine three-copy collective measurement that is optimal for the estimation task and is strictly better than alternative choices based on restricted collective measurements assisted by adaptive strategies.  Surprisingly, however, GME is not involved in any POVM element associated with this optimal measurement, which reveals a subtle distinction between quantum measurements and quantum states with regard to multipartite nonclassicality. 
Then we experimentally realize this genuine collective measurement with  high quality via a nine-step two-dimensional (2D)  photonic quantum walk featuring 30 elaborately designed coin operators. Based on this measurement we realize an optimal estimation protocol and achieve an unprecedented high estimation fidelity. 
Notably, our achieved estimation fidelity exceeds the upper bound for restricted collective measurements by over 11 standard deviations,  which demonstrates the true realization of a genuine three-copy collective measurement and its power in quantum state estimation.

\emph{Optimal qubit state estimation with a genuine three-copy collective measurement}---Suppose we are given three copies of an unknown qubit pure state as described by the density operator $\rho$ and are asked to estimate the identity of $\rho$ based on a suitable quantum measurement. The performance of an estimation strategy is characterized by the average estimation fidelity over potential measurement outcomes and the unknown state. If we can only perform local measurements on individual copies as illustrated in \fref{fig:figure1}(a), then the maximum estimation fidelity is $F=(3+\sqrt 3)/6\approx 0.7887$ \cite{PhysRevA.71.062318}. By contrast, if we can perform collective measurements on all three copies simultaneously as illustrated in \fref{fig:figure1}(c), then we can attain the maximum estimation fidelity $4/5$~\cite{MassP95}. Here we choose a special optimal measurement described by a POVM $\scrE=\{E_j\}_{j=1}^7$ with seven POVM elements \cite{PhysRevLett.81.1351}:
\begin{equation}\label{eq:measurement}
\begin{aligned}                   \!\!\!E_j&=\frac23|\psi_j\rangle\langle\psi_j|^{\otimes 3},\;\; j=1,\dots,6,\;\;
E_7&=\mathbb{I}-\sum_{j=1}^6E_j,
\end{aligned}
\end{equation}
where
\begin{equation}\label{eq:elements}
\begin{aligned}                      
|\psi_1\rangle=|0\rangle,&\quad |\psi_2\rangle=|1\rangle,\\
|\psi_3\rangle=\frac1{\sqrt2}(|0\rangle+|1\rangle),&\quad |\psi_4\rangle=\frac1{\sqrt2}(|0\rangle-|1\rangle),\\
|\psi_5\rangle=\frac1{\sqrt2}(|0\rangle+i|1\rangle),&\quad |\psi_6\rangle=\frac1{\sqrt2}(|0\rangle-i|1\rangle)
\end{aligned}
\end{equation}
are six single-qubit states which form a regular octahedron when represented on the Bloch sphere. 

To appreciate the significance of the optimal measurement characterized by the POVM $\scrE$, we need to introduce some additional concepts. Let $\caP$ be a given bipartition, say $(12|3)$, of the three parties associated with the three copies of $\rho$. A POVM (and the corresponding measurement)  is $\caP$ separable if every nonzero POVM element is $\caP$ separable (a $\caP$ separable state after normalization). A POVM  is \emph{biseparable} if it is a coarse-graining \cite{MartM90,Zhu22} of a convex combination of three POVMs that are $(12|3)$-separable, $(13|2)$-separable, and $(23|1)$-separable, respectively; see  Supplemental Material S4 \cite{supplementary} and companion paper \cite{YiZHX24} for more details.
Intuitively, all measurements that can be realized by restricted collective measurements assisted by classical communication, as illustrated in \fref{fig:figure1}(b), are biseparable. A measurement is  \emph{genuinely collective} if it is not biseparable.

As shown in  Supplemental Material S5, the POVM $\scrE$ defined in \eref{eq:measurement} is genuinely collective. Moreover, the maximum estimation fidelity of biseparable measurements is only
$(8 + \sqrt{22})/16\approx 0.7932$ according to the companion paper \cite{YiZHX24}. So genuine collective measurements are crucial to attaining the maximum estimation fidelity $4/5$. This result is quite surprising given that POVM elements $E_j$ for $j=1,2,\ldots,6$ correspond to product states, while $E_7$ is biseparable. Note that GME in POVM elements is not necessary for constructing a genuine collective POVM. This peculiar property is of intrinsic interest that is beyond the main focus of this work.

\begin{figure*}[htbp]	\centering\includegraphics[width=1\linewidth]{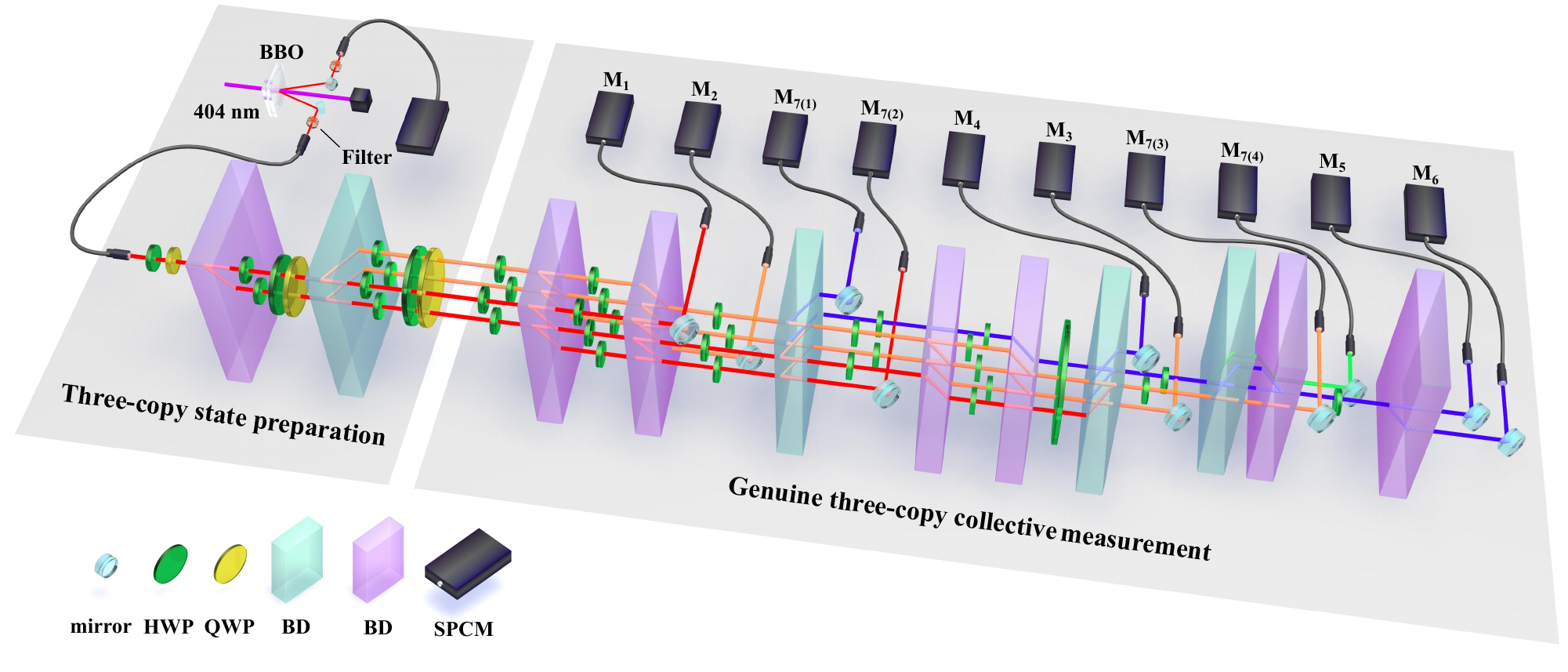}
	\caption{\label{fig:exp}  Experimental setup for realizing the genuine three-copy collective measurement in \eref{eq:measurement} and its application in quantum state estimation. The setup is composed of two modules. The preparation module encodes three copies of a qubit state in three degrees of freedom of a single photon. The measurement module realizes the genuine three-copy collective measurement via a nine-step 2D photonic quantum walk. Half-wave plates (HWPs) and quarter-wave plates (QWPs) are used to realize  30 engineered coin operators; two types of beam displacers (BDs) are used to realize two types of conditional transition operators, which correspond to horizontal walking (blue) and vertical walking (purple), respectively.  
		Beams with different colors between BDs mean different vertical positions.  The six single-photon-counting modules (SPCMs) $\mathrm{M_1}$ to $\mathrm{M_{6}}$ correspond to the first six outcomes in \eref{eq:measurement}, while
		the four SPCMs $\mathrm{M_{7(1)}}$ to $\mathrm{M_{7(4)}}$ together correspond to the last outcome in \eref{eq:measurement}.}
\end{figure*}

\emph{Implementing the genuine three-copy collective measurement via a 2D photonic quantum walk}---Quantum walks are a powerful tool for realizing general quantum measurements \cite{KurzW13,li19implementation,Wang23generalized}, which have been experimentally realized on single-qubit \cite{BianLQZ15,ZhaoYKX15} and two-qubit \cite{hou2018deterministic,experimental20tang,Wu20minimizing,yuan2020direct} photonic systems. Here, by extending from 1D- to 2D-quantum walks, we show how to realize quantum measurements on a three-qubit system. In a 2D quantum walk, the system state is effectively described by three degrees of freedom, i.e., the vertical and horizontal positions of the walker, denoted by $y,x=\dots,-1,0,1,\dots$, respectively, and the coin state represented by $c = 0, 1$. At step $t$, the system evolution is determined by a unitary operator of the form $U(t) = T(t)C(t)$. Here  $T(t)$  is a conditional translation operator and may take on one of the following two forms:
\begin{equation}
\begin{aligned}
\!\!\! T_\rmV & =\sum_y\left(|y+1,x,0\rangle\langle y,x,0|+|y-1,x,1\rangle\langle y,x,1|\right),\\
\!\!\! T_\rmH & =\sum_x\left(|y,x+1,0\rangle\langle y,x,0|+|y,x-1,1\rangle\langle y,x,1|\right),
\end{aligned}
\end{equation}
which describe the vertical and horizontal walkings of the walker, respectively;  $ C(t)=\sum_{y,x}|y,x\rangle\langle y,x|\otimes C(y,x,t) $ encodes position-dependent coin operators, which control the walking direction in each step. The desired quantum measurement is realized by
engineering the coin operators and measuring the walker position after certain steps.

To realize the genuine three-copy collective measurement in \eref{eq:measurement}, we encode two qubits in the walker's vertical and horizontal positions with $y,x=\pm1$, and encode the third qubit as the coin state. Other positions of the walker are used as an ancilla. Then the genuine three-copy collective measurement is realized by virtue of a nine-step 2D quantum walk with 30 elaborately designed coin operators as explained in  Supplemental Material S1.

%\textcolor{red}{that generates interaction between qubits to construct the measurement in \eref{eq:measurement}} as explained in  Supplemental Material S1. 

\emph{Experimental setup}---Our experimental setup is based on a photonic platform as illustrated in Fig.~\ref{fig:exp} and is composed of two modules designed to prepare thee-copy states and to perform the genuine three-copy collective measurement, respectively. 

%\textcolor{red}{Three copies of the input state are encoded in three degrees of freedom of a photon, i.e., the vertical positions, horizontal positions and polarization.} 

The second module implements the genuine three-copy collective measurement in \eref{eq:measurement} based on quantum walks as described above. Here the walker and coin states are represented by the path and polarization degrees of freedom of a photon, respectively. To implement the conditional translation operator $T(t)$, we use beam displacers (BDs) to displace the horizontal polarization (H) component away from the vertical polarization (V) component. 
By adjusting the optical axis of each BD, either horizontal walking or vertical walking can be realized at each step. To realize the 30 coin operators $C(y,x,t)$, we use a number of half-wave plates (HWPs) and quarter-wave plates (QWPs) with educated choices of rotation angles listed in  Supplemental Material S1.  After the quantum walk, the six single-photon-counting modules (SPCMs) M$_1$ to M$_{6}$ correspond to the first six POVM elements $E_1$ to $E_{6}$ in \eref{eq:measurement}, respectively,  while the four SPCMs M$_{7(1)}$ to M$_{7(4)}$ together correspond to the last POVM element $E_7$.

The first module of our setup is designed to prepare three copies of a given qubit state $\rho=|\psi\>\<\psi|$, encoded in three degrees of freedom of a photon, that is, the vertical position, horizontal position, and polarization. A 404-nm laser beam in vertical polarization pumps a 1mm-long beta-barium-borate (BBO) crystal and generates a pair of horizontally polarized photons at 808 nm after a type-1 phase-matched spontaneous parametric down-conversion (SPDC) process. One photon is measured by a SPCM to herald its twin photon as a single-photon source. The three degrees of freedom of the heralded photon are then used to encode the three copies of the target state $|\psi\>$. To prepare the first copy in the vertical path degree of freedom with positions $y=\pm 1$, we first encode it in the polarization degree of freedom using HWPs and QWPs and then transfer it to the vertical path basis using a vertical BD and HWPs. After passing through a HWP with deviation angle $45^{\circ}$ at path $y=-1$ and a HWP with deviation angle $0^{\circ} $ at path $y=1$, the heralded photon is prepared in the state $|\psi\rangle\otimes|x=1\rangle\otimes|H\rangle$. Then the second copy of the target state is prepared in the horizontal path degree of freedom with $x=\pm 1$. Finally, the desired three-copy state $|\psi\rangle^{\otimes 3}$ is prepared by virtue of another pair of HWP and QWP and is then sent into the measurement module described above.

\begin{figure}[t]
	\centering\includegraphics[width=1\linewidth]{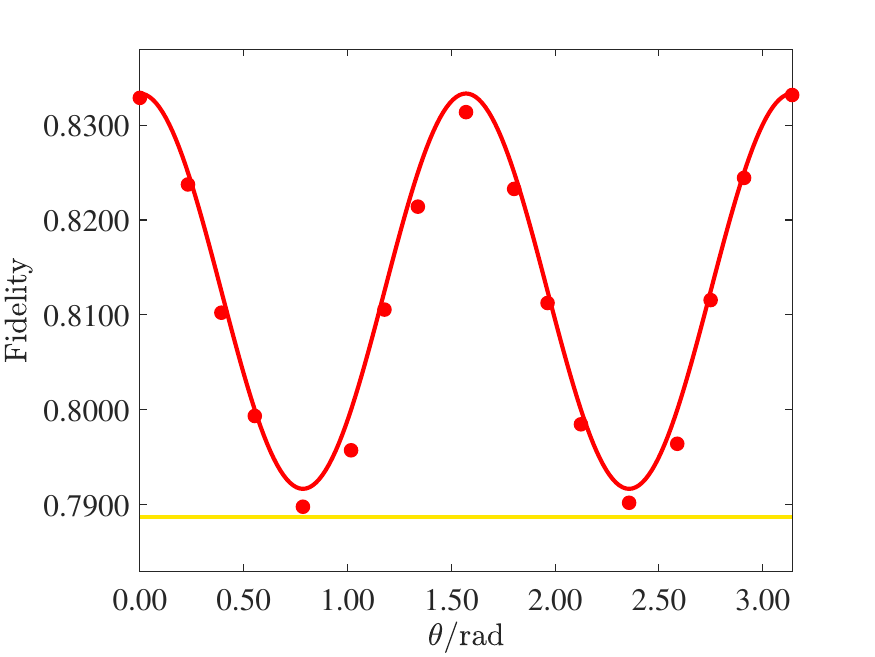}
	\caption{\label{fig:curve}Estimation fidelities achieved by the genuine three-copy collective measurement in \eref{eq:measurement} for input states of the form $|\psi(\theta)\rangle=\cos\theta|0\rangle+\sin\theta|1\rangle$.
		Each data point represents the average over 10 repetitions of 50000 runs each, and the error bar (almost invisible) denotes the standard deviation over the 10 repetitions. As benchmarks, the red line represents the theoretical prediction, and the yellow line represents the 
		upper bound for the average estimation fidelity based on local measurements. }
\end{figure}

\emph{Experimental results}---Before considering optimal quantum state estimation, we implement the genuine three-copy collective measurement in \eref{eq:measurement} and verify its quality by evaluating its fidelity with the ideal measurement. Due to the special structure of the ideal measurement, the fidelity can be evaluated by measuring the six reference states $|\psi_j\rangle^{\otimes 3}$ for $j=1,2,\dots,6$, where $|\psi_j\>$ are defined in \eref{eq:elements} (see  Supplemental Material S2 for details). The six states are prepared and sent into the measurement module 50000 times, and this procedure is repeated 10 times to determine the standard deviation.  The average fidelity determined in the experiment is $0.9942\pm 0.0011$, which 
demonstrates that the genuine collective measurement is realized with very high quality.\par

Next, we consider the estimation of a  random qubit pure state given three identically prepared copies. By virtue of the genuine three-copy collective measurement realized we can implement an optimal estimation protocol. 
If the measurement outcome is $E_j$ for $j=1,2,\ldots,6$, then we choose $|\psi_j\>$ defined in \eref{eq:elements} as an estimator of the true state.  If instead the outcome is $E_7$ (which may occur due to inevitable experimental imperfections), then we choose a random qubit pure state as an estimator. 

\begin{figure}[t]
	\centering\includegraphics[width=1\linewidth]{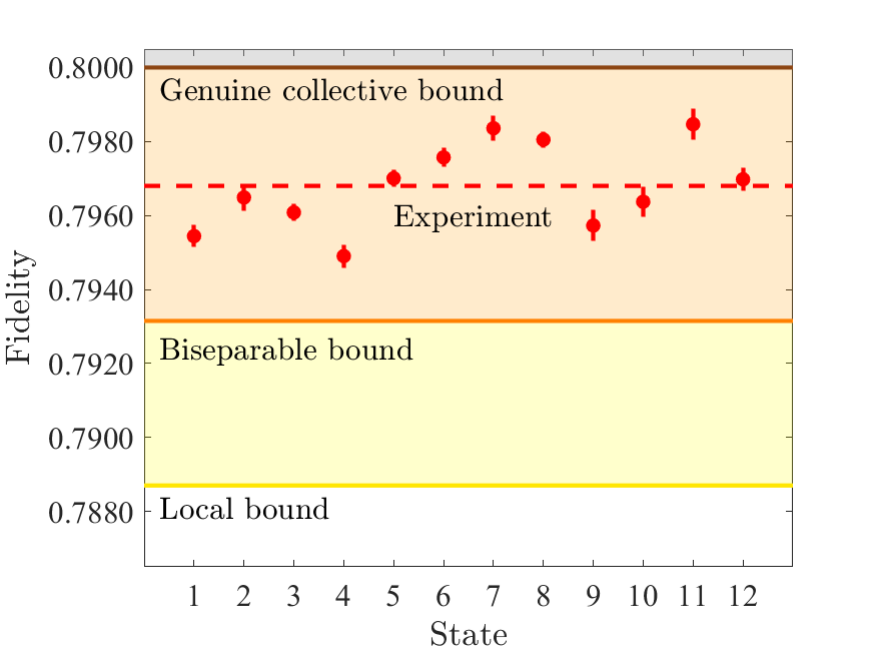}
	\caption{\label{fig:fidelity} 
		Estimation fidelities achieved by the genuine three-copy collective measurement for 12 qubit states that form a regular icosahedron on the Bloch sphere. Each data point represents the average over 10 repetitions of 50000 runs as in Fig.~\ref{fig:curve}, and the error bar denotes the standard deviation. The average estimation fidelity over the 12 states is shown as a red dashed line. 
		Theoretical estimation fidelities for the 12 states all coincide with the genuine collective bound shown as a brown line. The local bound (yellow line)  and biseparable bound (orange line) are also shown as benchmarks. 
	}
\end{figure}

In the first experiment, the input pure state has the form $|\psi(\theta)\rangle=\cos\theta|0\rangle+\sin\theta|1\rangle$ with $\theta=k\pi/16$ for $k=0,1,\ldots, 16$. After performing the genuine collective measurement on three copies of $|\psi(\theta)\rangle$ and obtaining a measurement outcome, an estimator can be constructed and its fidelity can be evaluated. The average fidelity is then determined by repeating this procedure $50000\times 10$ times. The experimental results shown in \fref{fig:curve} agree very well with the theoretical predictions. Notably, the genuine collective measurement can beat the local bound (0.7887) for all states whose Bloch vectors lie on the $x-z$ plane.

In the second experiment, we investigate the average fidelity when the input states are uniformly distributed on the whole Bloch sphere; three identically prepared copies are used in the estimation procedure as before. This uniform distribution can also be replaced by any set of quantum states that forms a 4-design \cite{Zhu22}. For simplicity here we choose 12 states that form a regular icosahedron on the Bloch sphere (see  Supplemental Material S3 for details). 
The estimation fidelity for each state is determined by performing the genuine three-copy collective measurement $50000\times 10$ times as before. Again,
the experimental results agree very well with the theoretical predictions, as shown in Fig.~4. The average estimation fidelity over the 12 states is 0.7968(3), which is much higher than the local bound (0.7887) as expected.  Moreover, the fidelity beats the biseparable bound (0.7932) by more than 11 standard deviations. This result clearly demonstrates that the measurement we realized has a superior information extraction capability over all restricted collective measurements. 
Such a genuine three-copy collective measurement has never been realized before.

%\textcolor{red}{The improvement due to the genuine multipartite nonclassicality is much larger than that due to biseqarable measurements. This indicates that the improvement stemming from quantum measurements is primarily due to genuine multipartite nonclassicality.} 

\emph{Summary and outlook}---By using a nine-step 2D photonic quantum walk with 30 engineered coin operators, we experimentally implemented a genuine three-copy collective measurement with a fidelity of 0.9942(11). As an application, we realized an optimal estimation protocol given three copies of a random qubit pure state and achieved an average estimation fidelity of 0.7968(3), which beats not only the local bound (0.7887), but also the biseparable bound (0.7932). These results clearly demonstrate the true implementation of a genuine three-copy collective measurement and its power in quantum information processing, which have never been demonstrated before.
Our work represents a major step in understanding genuine multipartite nonclassicality in quantum measurements instead of quantum states and is expected to trigger a cascade of future research works on this intriguing subject. 
More sophisticated  many-copy collective measurements may be realized using programmable photonic chips \cite{arrazola2021quantum,maring2024versatile}, which offer scalability and flexibility. Other platforms based on superconductors, ions, and neutral atoms are also promising for exploring the rich physics of collective measurements.

%In the future, it is promising to extend our approach  using programmable photonic chips \cite{arrazola2021quantum,maring2024versatile}, which offer scalability and flexibility. Notably, these chips enable us to realize  more sophisticated  many-copy collective measurements and explore 
%their rich physics.

%In the future, by virtue of programmable photonic chips \cite{arrazola2021quantum,maring2024versatile}, it is promising to extend our approach. Notably,  more sophisticated  many-copy collective measurements can be realized with these chips because of their scalability and flexibility.

\section{Acknowledgements}
The work at the University of Science and Technology of China is supported by the Innovation Program for Quantum Science and Technology (Grant No.~2023ZD0301400), the National Natural Science Foundation of China (Grants No.~62222512, No.~12104439, and No.~12134014), and the Anhui Provincial Natural Science Foundation (Grant No.~2208085J03).  The work at Fudan University is supported by the National Key Research and Development Program of China (Grant No.~2022YFA1404204), Shanghai Science and Technology Innovation Action Plan (Grant No.~24LZ1400200), Shanghai Municipal Science and Technology Major Project (Grant No.~2019SHZDZX01), Innovation Program for Quantum Science and Technology (Grant No.~2024ZD0300101), and the National Natural Science Foundation of China (Grant No.~92165109).

\appendix

%%%%%%%%%%%%%%%%%%%%%%%%%%%%%%%%%%%%%%%%%%%%%%%%%%%%%%%%%%%%%%%%%%%%%%%%%%%

% avoids incorrect hyphenation, added Nov/08 by SSR
\hyphenation{ALPGEN}
\hyphenation{EVTGEN}
\hyphenation{PYTHIA}

%\setcitestyle{round}

\onecolumngrid
 \clearpage
 \newpage

 %%%%%%%%% Merge with supplemental materials %%%%%%%%%%
 %%%%%%%%% Prefix a "S" to all equations, figures, tables and reset the counter
 %%%%%%%%%
 \setcounter{equation}{0}
 \setcounter{figure}{0}
 \setcounter{table}{0}
 \setcounter{section}{0}
 \makeatletter
 \renewcommand{\theequation}{S\arabic{equation}}
 \renewcommand{\thefigure}{S\arabic{figure}}
 \renewcommand{\thetable}{S\arabic{table}}
 \renewcommand{\thesection}{S\arabic{section}}
\onecolumngrid
 \begin{center}
 	\textbf{\large Experimental Realization of Genuine Three-Copy Collective Measurements for Optimal Information Extraction: Supplemental Material}
 \end{center}
 \centerline{(Dated: May 8, 2025)}
 \bigskip
In this Supplemental Material we present more details and some auxiliary results about the optimal POVM $\scrE$ defined in Eq.~(1) in the main text. 
In Secs.~S1 and S2 we explicate the realization and verification of $\scrE$. Section~S3 further provides the Bloch vectors of the 12 states employed for determining the estimation  fidelity of $\scrE$. In Sec.~S4 we introduce formal definitions of biseparable POVMs and genuine collective POVMs. In Sec.~S5 we show that $\scrE$  is genuinely collective although all its POVM elements are biseparable.

\section{S1. Realization of the genuine three-copy collective measurement}
The genuine three-copy collective measurement in Eq.~(1) can be realized by a nine-step 2D photonic quantum walk as shown in Fig.~2 in the main text and
Fig.~\ref{fig:exp} below. The coin operators and conditional translation operators featured in this scheme read
\begin{equation}
    \begin{gathered}                         C(\mathrm{H_1})=\left(\begin{array}{cc}
            1 & 0 \\
            0 & -1
        \end{array}\right),\quad  
        C(\mathrm{H_2})=\left(\begin{array}{cc}
            0 & 1 \\
            1 & 0
        \end{array}\right),\quad    
        C(\mathrm{H_3})=\frac 1{\sqrt 3}\left(\begin{array}{cc}
            \sqrt 2 & 1 \\
            1 & -\sqrt 2
        \end{array}\right),\\[1ex]
        C(\mathrm{H_4})=\frac12\left(\begin{array}{cc}
            -\sqrt 3 & 1 \\
            1 & \sqrt 3
        \end{array}\right),\quad    
        C(\mathrm{H_5})=\frac12\left(\begin{array}{cc}
            -\sqrt 3 & -1 \\
            -1 & \sqrt 3
        \end{array}\right),\quad    
        C(\mathrm{H_6})=\frac1{\sqrt 2}\left(\begin{array}{cc}
            1 & 1 \\
            1 & -1
        \end{array}\right),\\[1ex]
        C(\mathrm{H_7})=\frac1{\sqrt 2}\left(\begin{array}{cc}
            -1 & 1 \\
            1 & 1
        \end{array}\right),\quad    
        C(\mathrm{H_8})=\frac 1{\sqrt 3}\left(\begin{array}{cc}
            -\sqrt 2 & 1 \\
            1 & \sqrt 2
        \end{array}\right),\quad    
        C(\mathrm{H_9})=-\frac i{\sqrt 2}\left(\begin{array}{cc}
            1 & 1 \\
            1 & -1
        \end{array}\right),\\[1ex]
        T(1)=T(2)=T(4)=T(5)=T(8)=T(9)=T_\rmV, \quad
        T(3)=T(6)=T(7)=T_\rmH.\\
    \end{gathered}
\end{equation} 
The four coin operators $C(\mathrm{H_2})$, $C(\mathrm{H_1})$, $C(\mathrm{H_7})$, and  $C(\mathrm{H_3})$  are used  14, 4, 4, and 3 times, respectively, while the other five coin operators are used only once. In total,  30 coin operators are employed. These coin operators are realized by adjusting the deviation angles of HWPs as listed in the table embedded in \fref{fig:exp} and phase differences between individual paths. Two types of conditional translation operators are realized by adjusting the optical-axis directions of the BDs. 

\begin{figure*}[b]	\centering\includegraphics[width=1\linewidth]{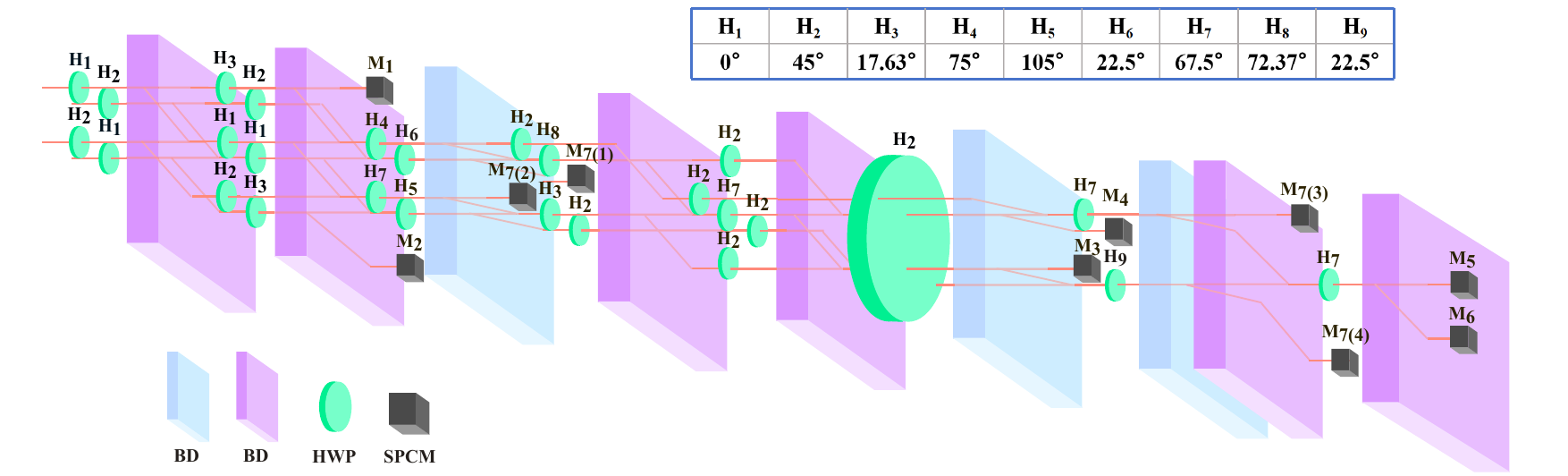}
	\caption{\label{fig:exp}  Schematic diagram of the experimental setup for realizing the genuine three-copy collective measurement in Eq.~(1). Blue BDs and purple BDs are used to realize the conditional translation operators along horizontal and vertical directions, respectively. 
 HWPs with deviation angles specified in the embedded table are used to realize the 30 coin operators. 
  The six SPCMs $\mathrm{M_1}$ to $\mathrm{M_{6}}$ correspond to the first six outcomes in Eq.~(1), while
 the four SPCMs $\mathrm{M_{7(1)}}$ to $\mathrm{M_{7(4)}}$ together correspond to the last outcome in Eq.~(1).}
\end{figure*}

The evolution of the system state in each step is governed by a unitary transformation of the form $U(t) = T(t)C(t)$, where $ C(t)=\sum_{y,x}|y,x\rangle\langle y,x|\otimes C(y,x,t) $. Accordingly, 
the  unitary transformation associated with the first $t$ steps has the form $U_t = T(t)C(t)\cdots T(2)C(2)T(1)C(1)$. Suppose $\rho^{\otimes 3}$ is input as the initial state and we measure the walker position after $t$ steps. Let $P$ be the projector onto the subspace that  is associated  with the three qubits and is spanned by the eight kets,
\begin{align}
  |\pm1, \pm1, 0\>,\quad  |\pm1, \pm1, 1\>.
\end{align}
Then the probability of obtaining outcome $(y=j,x=k)$ is given by 
\begin{equation}
    p=\tr \bigl[U_t \rho^{\otimes 3} U_t^{\dag}(|j,k\rangle\langle j, k|\otimes \bbone)\bigr]=\tr\bigl(\Pi \rho^{\otimes 3}\bigr),
\end{equation}
where $\bbone$ is the identity operator for the coin Hilbert space, and $\Pi=P U_t^{\dag}(|j,k\rangle\langle j, k|\otimes \bbone)U_t P$ is the  POVM element corresponding to the position $(y=j,x=k)$.

To verify that the measurement of the walker position can effectively realize the genuine collective measurement in Eq.~(1), we first determine the evolution operators tied to the first step $(t=1)$ and first two steps $(t=2)$, with the results
\begin{align}
    U_1P
    &=|2,1,0\rangle\langle 1,1,0|+|0,1,0\rangle\langle -1,1,1|-|0,1,1\rangle\langle 1,1,1|+|-2,1,1\rangle\langle -1,1,0|+|2,-1,0\rangle\langle 1,-1,1| \nonumber\\
    &\quad +|0,-1,0\rangle\langle -1,-1,0|+|0,-1,1\rangle\langle 1,-1,0|-|-2,-1,1\rangle\langle -1,-1,1|,\\
U_2P&=\sqrt{\frac23}|3,1,0\rangle\langle 1,1,0|+\sqrt{\frac13}|1,1,1\rangle\langle 1,1,0|+|1,1,0\rangle\langle -1,1,1|+|-1,1,1\rangle\langle 1,1,1|+|-1,1,0\rangle\langle -1,1,0|\nonumber\\
    &\quad 
    +|1,-1,1\rangle\langle 1,-1,1|+|1,-1,0\rangle\langle -1,-1,0|-|-1,-1,1\rangle\langle 1,-1,0|-\sqrt{\frac13}|-1,-1,0\rangle\langle -1,-1,1|\nonumber\\
    &\quad +\sqrt{\frac23}|-3,-1,1\rangle\langle -1,-1,1|.
\end{align}
Accordingly, the two POVM elements associated with  $\mathrm{M_1}$ and $\mathrm{M_2}$ at positions $(3,1)$ and $(-3,-1)$  read
\begin{align}
    \Pi_1&=PU_2^{\dag}(|3,1\rangle\langle 3, 1|\otimes \bbone)U_2P=\frac23|1,1,0\rangle\langle 1,1,0|=\frac23(|\psi_1\rangle\langle\psi_1|)^{\otimes 3},\\
    \Pi_2&=PU_2^{\dag}(|-3,-1\rangle\langle -3, -1|\otimes \bbone)U_2P=\frac23|-1,-1,1\rangle\langle -1,-1,1|=\frac23(|\psi_2\rangle\langle\psi_2|)^{\otimes 3}.
\end{align}
From now on, there is no need to consider the components at positions $(3,1)$ and $(-3,-1)$. 

%To avoid any misunderstandings, we define $P_1:=\sqrt{\frac23}|3,1,0\rangle\langle 1,1,0|, P_2:=\sqrt{\frac23}|-3,-1,1\rangle\langle -1,-1,1|$ which cannot evolve or be measured in following steps. \par 

The evolution operator tied to the first three steps reads
\begin{align}   U_3P&=\sqrt{\frac1{12}}|1,2,0\rangle\left(\langle 1,1,0|-3\langle -1,1,1|\right)+\frac12|1,0,1\rangle\left(\langle 1,1,0|+\langle -1,1,1|\right)\nonumber\\
    &\quad +\sqrt{\frac12}|-1,2,0\rangle\left(\langle 1,1,1|-\langle -1,1,0|\right)+\sqrt{\frac12}|-1,0,1\rangle(\langle 1,1,1|+\langle-1,1,0|)\nonumber\\
    &\quad +\sqrt{\frac{1}{2}}|1,0,0\rangle(\langle 1,-1,1|+\langle-1,-1,0|)+\sqrt{\frac12}|1,-2,1\rangle(\langle -1,-1,0|-\langle 1,-1,1|)\nonumber\\
    &\quad +\frac12|-1,0,0\rangle(\langle 1,-1,0|+\langle -1,-1,1)+\sqrt{\frac1{12}}|-1,-2,1\rangle(\langle -1,-1,1|-3\langle1,-1,0|).
\end{align}
Accordingly, the POVM elements associated with $\mathrm{M_{7(1)}}$ and $\mathrm{M_{7(2)}}$  read
\begin{align}
    \Pi_7&=PU_3^{\dag}(|1,-2\rangle\langle 1,-2|\otimes \bbone)U_3P=\frac12(|-1,-1,0\rangle-|1,-1,1\rangle)(\langle -1,-1,0|-\langle 1,-1,1|),\\
    \Pi_8&=PU_3^{\dag}(|-1,2\rangle\langle -1,2|\otimes \bbone)U_3P=\frac12(|1,1,1\rangle-|-1,1,0\rangle)(\langle 1,1,1|-\langle -1,1,0|).
\end{align}

Similarly, after step $t=6$ we have
\begin{align}    U_6P&=\sqrt{\frac1{12}}|1,1,1\rangle\left(\langle 1,1,0|-3\langle -1,1,1|\right)+\sqrt{\frac1{12}}|-1,-1,0\rangle(\langle -1,-1,1|-3\langle1,-1,0|)\nonumber\\
    &+\sqrt{\frac1{12}}|1,1,0\rangle\left(\langle 1,1,0|-2\langle 1,-1,1|+\langle -1,1,1|-2\langle -1,-1,0|\right)\nonumber\\
    &+\sqrt{\frac1{12}}|-1,-1,1\rangle(\langle -1,-1,1|-2\langle -1,1,0|+\langle 1,-1,0|-2\langle 1,1,1|)\nonumber\\
    &+\sqrt{\frac1{12}}|-1,1,0\rangle(\langle 1,1,1|+\langle 1,1,0|+\langle 1,-1,1|+\langle 1,-1,0|+\langle -1,1,1|+\langle -1,1,0|+\langle -1,-1,1|+\langle -1,-1,0|)\nonumber\\
    &+\sqrt{\frac1{12}}|1,-1,1\rangle(-\langle 1,1,1|+\langle 1,1,0|+\langle 1,-1,1|-\langle 1,-1,0|+\langle -1,1,1|-\langle -1,1,0|-\langle -1,-1,1|+\langle -1,-1,0|);
\end{align}
the POVM elements associated with $\mathrm{M_{3}}$ and $\mathrm{M_{4}}$ read
\begin{align}
    \Pi_3&=PU_6^{\dag}(|-1,1\rangle\langle -1, 1|\otimes \bbone)U_6P=\frac23(|\psi_3\rangle\langle\psi_3|)^{\otimes 3},\\
    \Pi_4&=PU_6^{\dag}(|1,-1\rangle\langle 1, -1|\otimes \bbone)U_6P=\frac23(|\psi_4\rangle\langle\psi_4|)^{\otimes 3}.
\end{align}

After step $t=8$ we have
\begin{align} U_8P&=\sqrt{\frac16}|2,2,0\rangle(\langle 1,-1,1|-2\langle-1,1,1|+\langle-1,-1,0|)-i\sqrt{\frac16}|-2,-2,1\rangle(\langle -1,1,0|-2\langle1,-1,0|+\langle1,1,1|)\nonumber\\
    &\quad -\sqrt{\frac16}|0,0,1\rangle(-\langle 1,1,0|+\langle1,-1,1|+\langle-1,1,1|+\langle-1,-1,0|)\nonumber\\
    &\quad +i\sqrt{\frac16}|0,0,0\rangle(-\langle -1,-1,1|+\langle-1,1,0|+\langle1,-1,0|+\langle1,1,1|);
\end{align}
the POVM elements associated  
with $\mathrm{M_{7(3)}}$ and $\mathrm{M_{7(4)}}$ read
\begin{align}
 \!   \Pi_9&=PU_8^{\dag}(|2,2\rangle\langle 2, 2|\otimes \bbone)U_8P
    =\frac16(| 1,-1,1\rangle-2|-1,1,1\rangle+|-1,-1,0\rangle)(\langle 1,-1,1|-2\langle-1,1,1|+\langle-1,-1,0|),\\
  \!  \Pi_{10}&=PU_8^{\dag}(|-2,-2\rangle\langle -2, -2|\otimes \bbone)U_8P
    =\frac16(| -1,1,0\rangle-2|1,-1,0\rangle+|1,1,1\rangle)(\langle -1,1,0|-2\langle1,-1,0|+\langle1,1,1|).
\end{align}

After step $t=9$ we have
\begin{equation}
\begin{aligned}
    U_9P&=\sqrt{\frac1{12}}|1,0,0\rangle(i\langle 1,1,1|+\langle 1,1,0|-\langle 1,-1,1|+i\langle 1,-1,0|-\langle -1,1,1|+i\langle -1,1,0|-i\langle -1,-1,1|-\langle -1,-1,0|)\\
    &-\sqrt{\frac1{12}}|-1,0,1\rangle(i\langle 1,1,1|-\langle 1,1,0|+\langle 1,-1,1|+i\langle 1,-1,0|+\langle -1,1,1|+i\langle -1,1,0|-i\langle -1,-1,1|+\langle -1,-1,0|);
\end{aligned}
\end{equation}
the POVM elements associated
with $\mathrm{M_{5}}$ and $\mathrm{M_{6}}$ read
\begin{align}
    \Pi_5&=PU_9^{\dag}(|1,0\rangle\langle 1, 0|\otimes \bbone)U_9P=\frac23(|\psi_5\rangle\langle\psi_5|)^{\otimes 3},\\
    \Pi_6&=PU_9^{\dag}(|-1,0\rangle\langle -1, 0|\otimes \bbone)U_9P=\frac23(|\psi_6\rangle\langle\psi_6|)^{\otimes 3}.
\end{align}

Notably, $\Pi_j=E_j$ for $j=1,2,\dots,6$ and $\sum_{j=7}^{10}\Pi_j=E_7$. Therefore, the nine-step 2D quantum walk illustrated in Fig.~\ref{fig:exp} can indeed realize the collective measurement in Eq.~(1).

\section{S2. Verification of the implemented genuine three-copy collective measurement}
Suppose $\{A_j\}_j$ and $\{A'_j\}_j$ are two POVMs on a given $d$-dimensional Hilbert space and have the same number of POVM elements. Following \rcite{hou2018deterministic} the fidelity between  $\{A_j\}_j$ and $\{A'_j\}_j$  is defined as
\begin{equation}
F=\left(\tr\sqrt{\sqrt{\sigma}\sigma'\sqrt{\sigma}}\right)^2,
\end{equation}
where 
\begin{align}
\sigma=\frac1d\sum_jA_j\otimes|j\rangle\langle j|,\quad  \sigma'=\frac1d\sum_jA'_j\otimes|j\rangle\langle j|,
\end{align}
and $\{|j\rangle\}_j$ forms an orthonormal basis for an ancillary system.

Here we are interested in the fidelity between the POVM $\scrE=\{E_j\}_{j=1}^7$ defined in Eq.~(1) and its experimental realization $\scrE'=\{E'_j\}_{j=1}^7$. Note that any three-copy state involved in the estimation problem studied in this work is supported in the tripartite symmetric subspace $\Sym_3(\caH)$, where $\caH$ is the Hilbert space for a single qubit. So we can focus on the realization of the effective POVM on this subspace, that is, $\{E_j\}_{j=1}^6$. Thanks to the special structure of the target POVM $\{E_j\}_{j=1}^6$, the fidelity of  $\{E'_j\}_{j=1}^6$  can be estimated by measuring six reference states $|\psi_j\rangle^{\otimes 3}$ for $j=1,2,\dots,6$, where $|\psi_j\rangle$ are defined in Eq.~(2). By definition it is straightforward to verify that
\begin{align}
\sum_{j=1}^6 (|\psi_j\rangle\langle\psi_j|)^{\otimes 3}=\frac32\sum_{j=1}^6 E_j=
\frac32 P_3, 
\end{align}
where $P_3$ is the projector onto $\Sym_3(\caH)$. Let $p_{jk}=\tr[(|\psi_j\rangle\langle\psi_j|)^{\otimes 3}E_k]$ be the probability of obtaining outcome $k$ given the reference state $|\psi_j\rangle^{\otimes 3}$ and the ideal measurement, and let $p'_{jk}=\tr[(|\psi_j\rangle\langle\psi_j|)^{\otimes 3}E'_k]$ be the counterpart associated with the actual POVM realized. Then we have  
\begin{align}
\sum_j p_{jk}=\frac32\tr E_k=1,\quad \sum_j p'_{jk}=\frac32\tr (P_3E'_k). 
\end{align}
Moreover,
\begin{align}   F&=\left(\frac1d\sum_{j=1}^6\tr \sqrt{\sqrt{E_j}E'_j\sqrt{E_j}}\right)^2=\left(\frac14\sum_{j=1}^6\sqrt{\tr (E_j E'_j)}\right)^2=\left(\frac14\sum_{j=1}^6\sqrt{\frac23\tr \left[(|\psi_j\rangle\langle\psi_j|)^{\otimes 3} E'_j\right]}\right)^2
\nonumber\\
&=\left(\frac14\sum_{j=1}^6\sqrt{\frac23p_{jj}'}\right)^2=\frac{1}{24} \left(\sum_{j=1}^6\sqrt{p_{jj}'}\right)^2,
\end{align}
where $d=4$ is the dimension of $\Sym_3(\caH)$. Here the second and third equalities hold because $E_j=\frac{2}{3}(|\psi_j\>\<\psi_j|)^{\otimes3}$ is rank 1 for $j=1,2,\ldots,6$. This equation means the fidelity of $\scrE'=\{E'_j\}_{j=1}^6$ is determined by the probabilities $p_{jj}'$ for $j=1,2,\ldots, 6$, which can be approximated by frequencies in experiments.

\section{S3. 12 states used to determine the estimation fidelity of the genuine collective measurement}
To determine the average estimation fidelity of the genuine three-copy collective measurement characterized by the POVM $\scrE$ defined in Eq.~(1) in the main text, we employed 12 qubit states whose Bloch vectors form the vertices of a regular icosahedron. To be concrete, the coordinates of these  vertices (Bloch vectors) read
\begin{equation}
\begin{aligned}
    \vec n_1&=\frac1{\sqrt{1+g^2}}(1 ,g, 0),&\quad \vec n_2&=\frac1{\sqrt{1+g^2}}(1 ,-g, 0),\\ \vec n_3&=\frac1{\sqrt{1+g^2}}(-1 ,g, 0),&\quad \vec n_4&=\frac1{\sqrt{1+g^2}}(-1 ,-g, 0),\\
    \vec n_5&=\frac1{\sqrt{1+g^2}}(0, 1 ,g),&\quad \vec n_6&=\frac1{\sqrt{1+g^2}}(0, 1 ,-g),\\ \vec n_7&=\frac1{\sqrt{1+g^2}}(0, -1 ,g),&\quad \vec n_8&=\frac1{\sqrt{1+g^2}}(0, -1 ,-g),\\
    \vec n_9&=\frac1{\sqrt{1+g^2}}(g, 0, 1),&\quad \vec n_{10}&=\frac1{\sqrt{1+g^2}}(-g, 0, 1),\\ \vec n_{11}&=\frac1{\sqrt{1+g^2}}(g, 0, -1),&\quad \vec n_{12}&=\frac1{\sqrt{1+g^2}}(-g, 0, -1),\\
\end{aligned}
\end{equation}
where $g=(1+\sqrt{5})/2$.

%According to the definition in the main text,  To appreciate the significance of the optimal measurement characterized by the POVM $\scrE$, we need to introduce some additional concepts. 

\section{S4. Biseparable measurements and genuine collective measurements}
To appreciate the significance of the optimal collective measurement characterized by the POVM $\scrE$ defined in Eq.~(1), here we provide formal definitions of biseparable POVMs (measurements) and genuine collective POVMs (measurements); see the companion paper \cite{YiZHX24} for more details.

Consider an $N$-partite quantum system with  total Hilbert space 
\begin{equation}\label{eq:HTtensorDecom}
    \caH_\rmT = \caH_1\otimes \caH_2\otimes \cdots\otimes \caH_N, 
\end{equation}
where $\caH_r$ is the Hilbert space of party $r$ for $r\in [N]:=\{1,2,\ldots, N\}$. A partition $\caP=\{I_1, I_2, \ldots, I_m\}$ of $[N]$ is a set of at least two disjoint nonempty subsets of $[N]$ whose union is $[N]$; it can also be written as $(I_1|I_2|\cdots |I_m)$. Here the order of $I_j$ can be changed without modifying the partition. 
The partition $\caP$ is a bipartition if $|\caP|=m=2$. With respect to the partition $\caP$ the Hilbert space $\caH_\rmT$ can be decomposed as follows:
\begin{equation}
	\caH_\rmT = \bigotimes_{k=1}^{m}\caH_{I_k},\quad \caH_{I_k} : = \bigotimes_{r\in I_k}\caH_r. 
\end{equation}
Here we assume that the tensor factors in the expansion of  $\bigotimes_{k=1}^{m}\caH_{I_k}$ follow the order in \eref{eq:HTtensorDecom}. A positive operator  $A$ on $\caH_\rmT$ is $\caP$ separable if it can be expressed as follows:
\begin{align}
A=\sum_l A_l^{I_1}\otimes A_l^{I_2}\otimes \cdots\otimes A_l^{I_m},
\end{align}
where $A_l^{I_k}$ is a positive operator on $\caH_{I_k}$ for each $l$ and  $k=1,2,\ldots,m$. 
 The  operator  $A$ is \emph{biseparable} if it can be expressed as 
\begin{align}
A=\sum_{\caP,\,|\caP|=2}A_\caP,
\end{align}
where the summation runs over all bipartitions of $[N]$ and $A_\caP$ is $\caP$ separable. 

%When the post-measurement quantum state is not of concern, a quantum measurement on $\caH_\rmT$ can be described by a 
%POVM $\scrA=\{A_j\}_j$, wihch is composed of a set of positive operators on $\caH_\rmT$ that sum up to the identity operator. 

A POVM $\scrA=\{A_j\}_j$ (and the corresponding measurement) on $\caH_\rmT$ 
is $\caP$ separable if every POVM element $A_j$ is $\caP$ separable. The POVM $\scrA$   is \emph{biseparable} if it is a coarse-graining  of a POVM \cite{MartM90,Zhu22} of the form 
\begin{align}
	\scrK=\bigsqcup_{\caP,\, |\caP|=2} p_\caP\scrK_\caP,
\end{align}
where $\scrK_\caP$ is a POVM on $\caH_\rmT$ that is $\caP$ separable,   $p_\caP\scrK_\caP$ means an element-wise product, $\{p_\caP\}_\caP$ forms a probability distribution, and the notation $\bigsqcup_{\caP,\, |\caP|=2}$ means the disjoint union over all bipartitions of $[N]$. By construction $\scrK$ is a convex combination of $\scrK_\caP$ associated with bipartitions and
can be realized by performing each $\scrK_\caP$ with probability $p_\caP$, then the POVM $\scrA$ can be realized via data processing. 
By definition all POVM elements of a biseparable POVM are biseparable, but the converse is not guaranteed automatically. A POVM is  \emph{genuinely collective} if it is not biseparable.

Next, we provide several concrete examples of biseparable POVMs. Let $|\Phi^+\rangle, |\Phi^-\rangle, |\Psi^+\rangle, |\Psi^-\rangle$ be four Bell states that form a Bell basis for a two-qubit system:
\begin{align}
    |\Phi^\pm\>=\frac{|00\>\pm |11\>}{\sqrt{2}},\quad |\Psi^\pm\>=\frac{|01\>\pm|10\>}{\sqrt{2}}. 
\end{align}
Let $\scrA= \{A_j\}^8_{j=1}$ and $\scrB = \{B_j\}^8_{j=1}$ be two three-qubit POVMs defined as follows:
\begin{equation}
    \begin{aligned}                  
     A_1=|\Phi^+\rangle\langle\Phi^+|\otimes|0\rangle\langle0|,\quad A_2=|\Phi^-\rangle\langle\Phi^-|\otimes|0\rangle\langle0|, \quad        A_3=|\Psi^+\rangle\langle\Psi^+|\otimes|0\rangle\langle0|,\quad A_4=|\Psi^-\rangle\langle\Psi^-|\otimes|0\rangle\langle0|,\\
        A_5=|\Phi^+\rangle\langle\Phi^+|\otimes|1\rangle\langle1|,\quad A_6=|\Phi^-\rangle\langle\Phi^-|\otimes|1\rangle\langle1|,\quad        A_7=|\Psi^+\rangle\langle\Psi^+|\otimes|1\rangle\langle1|,\quad A_8=|\Psi^-\rangle\langle\Psi^-|\otimes|1\rangle\langle1|,  \\            
        B_1=|0\rangle\langle0|\otimes|\Phi^+\rangle\langle\Phi^+|,\quad B_2=|0\rangle\langle0|\otimes|\Phi^-\rangle\langle\Phi^-|,\quad        B_3=|0\rangle\langle0|\otimes|\Psi^+\rangle\langle\Psi^+|,\quad B_4=|0\rangle\langle0|\otimes|\Psi^-\rangle\langle\Psi^-|,\\
        B_5=|1\rangle\langle1|\otimes|\Phi^+\rangle\langle\Phi^+|,\quad B_6=|1\rangle\langle1|\otimes|\Phi^-\rangle\langle\Phi^-|,\quad         B_7=|1\rangle\langle1|\otimes|\Psi^+\rangle\langle\Psi^+|,\quad B_8=|1\rangle\langle1|\otimes|\Psi^-\rangle\langle\Psi^-|.
    \end{aligned}
\end{equation}
Note that  $\scrA$ is $(12|3)$ separable, while $\scrB$ is $(23|1)$ separable (here $1$ and  $12$  are shorthands for $\{1\}$ and $\{1,2\}$, respectively, and likewise for $2$, $3$, $13$, and $23$). By virtue of $\scrA$ and $\scrB$ we can construct more complicated biseparable POVMs. Here are two specific examples:
\begin{align}
 \scrK_1=\{p A_j\}^8_{j=1}\cup \{(1-p)B_j\}^8_{j=1},\quad 
 \scrK_2=\{p A_j+(1-p)B_j\}^8_{j=1},\quad 0<p<1,
\end{align}
where  $\scrK_1$ is a convex combination of  $\scrA$ and $\scrB$, while  $\scrK_2$ is a coarse-graining of $\scrK_1$.  The POVM $\scrK_1$ can be realized by performing 
$\scrA$ and $\scrB$ with probabilities $p$ and  $1-p$, respectively; then the POVM $\scrK_2$ can be realized if we ignore which POVM is performed. By definition, both  $\scrK_1$ and  $\scrK_2$ are biseparable. In contrast, the POVM $\scrE$ presented in Eq.~(1) in the main text is genuinely collective as shown in the next section.

\section{S5. A genuine collective POVM 
composed of biseparable POVM elements }
Here we show that the POVM $\scrE$ defined in Eq.~(1) in the main text is genuinely collective although all its POVM elements are biseparable.

Note that the POVM element $E_7$ in Eq.~(1) can be regarded as an operator acting on $\caH^{\otimes3}$, where $\caH$ is the Hilbert space for a single qubit.  Let $\bbW$ be the unitary operator on $\caH^{\otimes3}$ that is associated with a cyclic permutation of the three subsystems. Let  $P_2^{\caA}$ be the projector onto
the antisymmetric subspace in $\caH^{\otimes 2}$ and  $\Pi = P_2^{\caA} \otimes \bbone$.
Then the POVM element $E_7$  can be expressed as follows:
\begin{equation}
E_7=\frac23(\Pi+\bbW^{\dagger}\Pi \bbW+\bbW\Pi \bbW^{\dagger}),
\end{equation}
so  $E_7$ is biseparable and invariant under the action of $\bbW$. It follows that all POVM elements in $\scrE$ are biseparable. 

%Note that $\Pi$ is $(12|3)$ partition separable, $\bbW^{\dagger}\Pi \bbW$ is $(13|2)$ partition separable, $\bbW\Pi \bbW^{\dagger}$ is $(23|1)$ partition separable, so $E_7$ is biseparable.

%   Although every element in $\scrE$ is separable,

Next, we prove that the POVM $\scrE$ defined in Eq.~(1) is genuinely collective. Suppose, by way of contradiction, that $\scrE$ is biseparable. Then $\scrE$ can be expressed as a coarse-graining of a POVM of the form 
\begin{align}
\scrK=p_3 \scrK_{(12|3)}\sqcup p_2\scrK_{(13|2)}\sqcup p_1 \scrK_{(23|1)},
\end{align}
where $p_1, p_2, p_3$ form a probability distribution, and the three POVMs $\scrK_{(12|3)}$, $\scrK_{(13|2)}$, and $\scrK_{(23|1)}$ are biseparable with respect to the three bipartitions $(12|3)$, $(13|2)$, and $(23|1)$, respectively. In addition, we can assume that all POVM elements in the three POVMs are rank-1 and that $p_3>0$ without loss of generality.

%Denote by $\Sym_3(\caH)$ the symmetric subspace in $\caH^{\otimes 3}$. 

Since every POVM element in  $\scrE$ is  supported either in the symmetric subspace $\Sym_3(\caH)$ in $\caH^{\otimes 3}$ or in its orthogonal complement $\Sym_3(\caH)^\perp$, every POVM element in  $\scrK_{(12|3)}$ has the same property. Let $\scrK_{(12|3)}'$ be the subset of POVM elements in  $\scrK_{(12|3)}$ that are supported in  $\Sym_3(\caH)^\perp$; then $\scrK_{(12|3)}'$ can be regarded as  a POVM on $\Sym_3(\caH)^\perp$, which implies that
\begin{align}\label{eq:POVMgcolProof}
    \sum_{K\in \scrK_{(12|3)}'}K=\bbone-P_3, 
\end{align}
where $P_3$ is the projector onto $\Sym_3(\caH)$. Therefore, $\tr(P_3K)=0$ for all $K\in \scrK_{(12|3)}'$. Meanwhile, by assumption each POVM element in $\scrK_{(12|3)}'$ has the form $w_j|\Phi_j\>\<\Phi_j|\otimes |\varphi_j\>\<\varphi_j|$, where $w_j> 0$, $|\Phi_j\>\in \caH^{\otimes 2}$, and $|\varphi_j\>\in\caH$. According to \lref{lem:P3orthogonal} below, we have $|\Phi_j\>\<\Phi_j|=P_2^\caA$, so every POVM element in $\scrK_{(12|3)}'$ is supported in $\Sym_2(\caH)^\perp\otimes \caH$, which has dimension 2. Consequently, $\sum_{K\in \scrK_{(12|3)}'}K$ has rank at most 2, which contradicts \eref{eq:POVMgcolProof}. This contradiction shows  that  $\scrE$ cannot be biseparable and is thus genuinely collective.

In the rest of this section, we prove an auxiliary lemma employed in the above proof. 

%\pref{pro:POVMgcol}. 
\begin{lemma}\label{lem:P3orthogonal}
	Suppose $\caH$ has dimension 2, $|\Phi\>\in \caH^{\otimes 2}$, and $|\varphi\>\in\caH$. Then $\tr[P_3(|\Phi\>\<\Phi|\otimes |\varphi\>\<\varphi|)]=0$ iff $|\Phi\>\<\Phi|=P_2^\caA$. 
\end{lemma}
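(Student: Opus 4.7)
The plan is to reduce the hypothesis to a vector equation, eliminate the antisymmetric part by a symmetry argument, and rule out any symmetric contribution using the spanning property of tensor powers. First I would observe that
\begin{equation}
\tr[P_3(|\Phi\>\<\Phi|\otimes|\varphi\>\<\varphi|)]=\|P_3(|\Phi\>\otimes|\varphi\>)\|^2,
\end{equation}
so the hypothesis is equivalent to the single vector equation $P_3(|\Phi\>\otimes|\varphi\>)=0$. I will tacitly assume $|\Phi\>$ and $|\varphi\>$ are unit vectors, as in the application.

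For the easy direction, if $|\Phi\>\<\Phi|=P_2^\caA$ then $|\Phi\>$ lies in the one-dimensional antisymmetric subspace and is therefore anti-invariant under swap of the first two qubits. Since $P_3$ commutes with every permutation of the three tensor factors, inserting this swap yields $P_3(|\Phi\>\otimes|\varphi\>)=-P_3(|\Phi\>\otimes|\varphi\>)$, which gives $0$.

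For the converse I would decompose $|\Phi\>=|\Phi_s\>+|\Phi_a\>$ into its components in $\Sym_2(\caH)$ and its orthogonal complement. The preceding argument kills the antisymmetric piece, so the task reduces to showing that $P_3(|\Phi_s\>\otimes|\varphi\>)=0$ forces $|\Phi_s\>=0$. The cleanest route invokes the standard fact that $\Sym_3(\caH)$ is spanned by tensor powers $\{|\psi\>^{\otimes 3}:|\psi\>\in\caH\}$; taking inner products with such spanning vectors turns the vanishing condition into the polynomial identity $\<\Phi_s|\psi\>^{\otimes 2}\<\varphi|\psi\>=0$ in the coefficients of $|\psi\>$. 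Since $\mathbb{C}[z_0,z_1]$ is an integral domain and $\<\varphi|\psi\>$ is a nonzero linear form (because $|\varphi\>\ne 0$), the quadratic factor $\<\Phi_s|\psi\>^{\otimes 2}$ must vanish identically; since $\{|\psi\>^{\otimes 2}\}$ spans $\Sym_2(\caH)$, this forces $|\Phi_s\>=0$. Therefore $|\Phi\>$ is a unit vector in the one-dimensional antisymmetric subspace, and $|\Phi\>\<\Phi|=P_2^\caA$.

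I expect the main obstacle to be this last step, namely confirming that no nonzero symmetric $|\Phi_s\>$ can be annihilated by the symmetrization once tensored with a nonzero $|\varphi\>$. As a backup, the same conclusion can be reached by direct calculation in the orthonormal basis $\{|00\>,(|01\>+|10\>)/\sqrt{2},|11\>\}$ of $\Sym_2(\caH)$ and $\{|0\>,|1\>\}$ for $|\varphi\>$: setting to zero the coefficients of $P_3(|\Phi_s\>\otimes|\varphi\>)$ in the four symmetric basis vectors $|000\>$, $|001\>+|010\>+|100\>$, $|011\>+|101\>+|110\>$, $|111\>$ of $\Sym_3(\caH)$ yields a small linear system whose only solution is $|\Phi_s\>=0$ whenever $|\varphi\>\neq 0$.
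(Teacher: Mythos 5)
Your proof is correct, and while the forward direction (inserting the swap $\bbW_{(12)}$ under $P_3$) coincides with the paper's, your converse takes a genuinely different route. The paper keeps the statement at the level of the trace: it rotates $|\varphi\>$ to $|0\>$, explicitly computes the partial-trace operator $R=\tr_3[P_3(\bbone\otimes \bbone\otimes |0\>\<0|)]$, observes that $R$ is positive with full support on $\Sym_2(\caH)$, and concludes that $\<\Phi'|R|\Phi'\>=0$ forces $|\Phi'\>$ into the one-dimensional antisymmetric subspace. You instead first reduce the hypothesis to the vector equation $P_3(|\Phi\>\otimes|\varphi\>)=0$ (valid since $P_3$ is an orthogonal projector), split $|\Phi\>$ into symmetric and antisymmetric parts, and dispose of the symmetric part by pairing against the spanning set $\{|\psi\>^{\otimes 3}\}$ of $\Sym_3(\caH)$ and factoring the resulting polynomial identity in an integral domain. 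The paper's argument is more computational but entirely concrete for qubits; yours is slightly more abstract but buys generality essentially for free --- the integral-domain step works verbatim for $\caH$ of any dimension and for $\Sym_{k+1}$ in place of $\Sym_3$, showing that $P_{k+1}(|\Phi\>\otimes|\varphi\>)=0$ with $|\varphi\>\neq 0$ forces the $\Sym_k$-component of $|\Phi\>$ to vanish. Two minor points you should tidy up in a final write-up: the expression $\<\Phi_s|\psi\>^{\otimes 2}\<\varphi|\psi\>$ is a polynomial in the conjugated coefficients of $|\psi\>$, so state explicitly that it vanishes on all of $\mathbb{C}^2$ (by homogeneity the normalization of $|\psi\>$ is irrelevant) before invoking the integral-domain argument; and the normalization of $|\Phi\>$ and $|\varphi\>$, which you flag as tacit, is genuinely needed for the ``only if'' conclusion $|\Phi\>\<\Phi|=P_2^\caA$ and for the nonvanishing of the linear factor, so it is worth recording as a standing hypothesis.
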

\begin{proof}
	Let $\bbW_{(12)}$ be the swap operator acting on the first two parties of $\caH^{\otimes 3}$. If $|\Phi\>\<\Phi|=P_2^\caA$, then 
	\begin{equation}
		\tr[P_3(|\Phi\>\<\Phi|\otimes |\varphi\>\<\varphi|)]=\tr\bigl[P_3\bbW_{(12)}(|\Phi\>\<\Phi|\otimes |\varphi\>\<\varphi|)\bigr]=-\tr[P_3(|\Phi\>\<\Phi|\otimes |\varphi\>\<\varphi|)],
	\end{equation}
	which implies that $\tr[P_3(|\Phi\>\<\Phi|\otimes |\varphi\>\<\varphi|)]=0$. 
	
To prove the converse, suppose $\tr[P_3(|\Phi\>\<\Phi|\otimes |\varphi\>\<\varphi|)]=0$.   
	Choose a unitary operator $U$ on $\caH$ such that $U|\varphi\>=|0\>$ and let $|\Phi'\>=U^{\otimes 2}|\Phi\>$. Then we have
	\begin{align}
		0&=\tr[P_3(|\Phi\>\<\Phi|\otimes |\varphi\>\<\varphi|)]=\tr\bigl[P_3U^{\otimes 3}(|\Phi\>\<\Phi|\otimes |\varphi\>\<\varphi|)U^{\dag\otimes 3}\bigr]=\tr[P_3(|\Phi'\>\<\Phi'|\otimes |0\>\<0|)]=\tr(|\Phi'\>\<\Phi'| R),
	\end{align}
	where 
	\begin{align}
		R=\tr_3[P_3(\bbone\otimes \bbone\otimes |0\>\<0|)]=|00\>\<00|+\frac{(|01\>+|10\>)(\<01|+\<10|)}{3}
	+\frac{|11\>\<11|}{3}
	\end{align}
 is a positive operator on $\caH^{\otimes 2}$ and has full support in $\Sym_2(\caH)$. So  $|\Phi'\>$ is necessarily supported in the antisymmetric subspace $\Sym_2(\caH)^\perp$, which is spanned by the singlet $(|01\>-|10\>)/\sqrt{2}$. It follows that $|\Phi\>\<\Phi|=|\Phi'\>\<\Phi'|=P_2^\caA$, which completes the proof of \lref{lem:P3orthogonal}. 
\end{proof}


\begin{thebibliography}{44}
	\expandafter\ifx\csname natexlab\endcsname\relax\def\natexlab#1{#1}\fi
	\expandafter\ifx\csname bibnamefont\endcsname\relax
	\def\bibnamefont#1{#1}\fi
	\expandafter\ifx\csname bibfnamefont\endcsname\relax
	\def\bibfnamefont#1{#1}\fi
	\expandafter\ifx\csname citenamefont\endcsname\relax
	\def\citenamefont#1{#1}\fi
	\expandafter\ifx\csname url\endcsname\relax
	\def\url#1{\texttt{#1}}\fi
	\expandafter\ifx\csname urlprefix\endcsname\relax\def\urlprefix{URL }\fi
	\providecommand{\bibinfo}[2]{#2}
	\providecommand{\eprint}[2][]{\url{#2}}
	
	\bibitem[{\citenamefont{Bell}(1964)}]{Bell64}
	\bibinfo{author}{\bibfnamefont{J.~S.} \bibnamefont{Bell}}, \bibinfo{journal}{Physics} \textbf{\bibinfo{volume}{1}}, \bibinfo{pages}{195} (\bibinfo{year}{1964}).
	
	\bibitem[{\citenamefont{Brunner et~al.}(2014)\citenamefont{Brunner, Cavalcanti, Pironio, Scarani, and Wehner}}]{BrunCPS14}
	\bibinfo{author}{\bibfnamefont{N.}~\bibnamefont{Brunner}}, \bibinfo{author}{\bibfnamefont{D.}~\bibnamefont{Cavalcanti}}, \bibinfo{author}{\bibfnamefont{S.}~\bibnamefont{Pironio}}, \bibinfo{author}{\bibfnamefont{V.}~\bibnamefont{Scarani}}, \bibnamefont{and} \bibinfo{author}{\bibfnamefont{S.}~\bibnamefont{Wehner}}, \bibinfo{journal}{Rev. Mod. Phys.} \textbf{\bibinfo{volume}{86}}, \bibinfo{pages}{419} (\bibinfo{year}{2014}).
	
	\bibitem[{\citenamefont{Nielsen and Chuang}(2000)}]{NielC00book}
	\bibinfo{author}{\bibfnamefont{M.~A.} \bibnamefont{Nielsen}} \bibnamefont{and} \bibinfo{author}{\bibfnamefont{I.~L.} \bibnamefont{Chuang}}, \emph{\bibinfo{title}{Quantum Computation and Quantum Information}} (\bibinfo{publisher}{Cambridge University Press}, \bibinfo{address}{Cambridge, UK}, \bibinfo{year}{2000}).
	
	\bibitem[{\citenamefont{Peres and Wootters}(1991)}]{PereW91}
	\bibinfo{author}{\bibfnamefont{A.}~\bibnamefont{Peres}} \bibnamefont{and} \bibinfo{author}{\bibfnamefont{W.~K.} \bibnamefont{Wootters}}, \bibinfo{journal}{Phys. Rev. Lett.} \textbf{\bibinfo{volume}{66}}, \bibinfo{pages}{1119} (\bibinfo{year}{1991}).
	
	\bibitem[{\citenamefont{Massar and Popescu}(1995)}]{MassP95}
	\bibinfo{author}{\bibfnamefont{S.}~\bibnamefont{Massar}} \bibnamefont{and} \bibinfo{author}{\bibfnamefont{S.}~\bibnamefont{Popescu}}, \bibinfo{journal}{Phys. Rev. Lett.} \textbf{\bibinfo{volume}{74}}, \bibinfo{pages}{1259} (\bibinfo{year}{1995}).
	
	\bibitem[{\citenamefont{Bagan et~al.}(2006)\citenamefont{Bagan, Ballester, Gill, {Mu\~noz-Tapia}, and Romero-Isart}}]{BagaBGM06S}
	\bibinfo{author}{\bibfnamefont{E.}~\bibnamefont{Bagan}}, \bibinfo{author}{\bibfnamefont{M.~A.} \bibnamefont{Ballester}}, \bibinfo{author}{\bibfnamefont{R.~D.} \bibnamefont{Gill}}, \bibinfo{author}{\bibfnamefont{R.}~\bibnamefont{{Mu\~noz-Tapia}}}, \bibnamefont{and} \bibinfo{author}{\bibfnamefont{O.}~\bibnamefont{Romero-Isart}}, \bibinfo{journal}{Phys. Rev. Lett.} \textbf{\bibinfo{volume}{97}}, \bibinfo{pages}{130501} (\bibinfo{year}{2006}).
	
	\bibitem[{\citenamefont{Zhu}(2012)}]{Zhu12the}
	\bibinfo{author}{\bibfnamefont{H.}~\bibnamefont{Zhu}}, Ph.D. thesis, \bibinfo{school}{National University of Singapore} (\bibinfo{year}{2012}), \bibinfo{note}{available at \url{https://scholarbank.nus.edu.sg/bitstream/10635/35247/1/ZhuHJthesis.pdf}}, \urlprefix\url{https://scholarbank.nus.edu.sg/bitstream/10635/35247/1/ZhuHJthesis.pdf}.
	
	\bibitem[{\citenamefont{Zhu and Hayashi}(2018)}]{ZhuH18U}
	\bibinfo{author}{\bibfnamefont{H.}~\bibnamefont{Zhu}} \bibnamefont{and} \bibinfo{author}{\bibfnamefont{M.}~\bibnamefont{Hayashi}}, \bibinfo{journal}{Phys. Rev. Lett.} \textbf{\bibinfo{volume}{120}}, \bibinfo{pages}{030404} (\bibinfo{year}{2018}).
	
	\bibitem[{\citenamefont{Gisin and Popescu}(1999)}]{GisiP99}
	\bibinfo{author}{\bibfnamefont{N.}~\bibnamefont{Gisin}} \bibnamefont{and} \bibinfo{author}{\bibfnamefont{S.}~\bibnamefont{Popescu}}, \bibinfo{journal}{Phys. Rev. Lett.} \textbf{\bibinfo{volume}{83}}, \bibinfo{pages}{432} (\bibinfo{year}{1999}).
	
	\bibitem[{\citenamefont{Vidrighin et~al.}(2014)\citenamefont{Vidrighin, Donati, Genoni, Jin, Kolthammer, Kim, Datta, Barbieri, and Walmsley}}]{VidrDGJ14}
	\bibinfo{author}{\bibfnamefont{M.~D.} \bibnamefont{Vidrighin}}, \bibinfo{author}{\bibfnamefont{G.}~\bibnamefont{Donati}}, \bibinfo{author}{\bibfnamefont{M.~G.} \bibnamefont{Genoni}}, \bibinfo{author}{\bibfnamefont{X.-M.} \bibnamefont{Jin}}, \bibinfo{author}{\bibfnamefont{W.~S.} \bibnamefont{Kolthammer}}, \bibinfo{author}{\bibfnamefont{M.~S.} \bibnamefont{Kim}}, \bibinfo{author}{\bibfnamefont{A.}~\bibnamefont{Datta}}, \bibinfo{author}{\bibfnamefont{M.}~\bibnamefont{Barbieri}}, \bibnamefont{and} \bibinfo{author}{\bibfnamefont{I.~A.} \bibnamefont{Walmsley}}, \bibinfo{journal}{Nat. Commun.} \textbf{\bibinfo{volume}{5}}, \bibinfo{pages}{3532} (\bibinfo{year}{2014}).
	
	\bibitem[{\citenamefont{Lu and Wang}(2021)}]{Lu21incorporating}
	\bibinfo{author}{\bibfnamefont{X.-M.} \bibnamefont{Lu}} \bibnamefont{and} \bibinfo{author}{\bibfnamefont{X.}~\bibnamefont{Wang}}, \bibinfo{journal}{Phys. Rev. Lett.} \textbf{\bibinfo{volume}{126}}, \bibinfo{pages}{120503} (\bibinfo{year}{2021}), \urlprefix\url{https://link.aps.org/doi/10.1103/PhysRevLett.126.120503}.
	
	\bibitem[{\citenamefont{Chen et~al.}(2022)\citenamefont{Chen, Chen, and Yuan}}]{chen22information}
	\bibinfo{author}{\bibfnamefont{H.}~\bibnamefont{Chen}}, \bibinfo{author}{\bibfnamefont{Y.}~\bibnamefont{Chen}}, \bibnamefont{and} \bibinfo{author}{\bibfnamefont{H.}~\bibnamefont{Yuan}}, \bibinfo{journal}{Phys. Rev. Lett.} \textbf{\bibinfo{volume}{128}}, \bibinfo{pages}{250502} (\bibinfo{year}{2022}), \urlprefix\url{https://link.aps.org/doi/10.1103/PhysRevLett.128.250502}.
	
	\bibitem[{\citenamefont{Higgins et~al.}(2011)\citenamefont{Higgins, Doherty, Bartlett, Pryde, and Wiseman}}]{higgins11multiple}
	\bibinfo{author}{\bibfnamefont{B.~L.} \bibnamefont{Higgins}}, \bibinfo{author}{\bibfnamefont{A.~C.} \bibnamefont{Doherty}}, \bibinfo{author}{\bibfnamefont{S.~D.} \bibnamefont{Bartlett}}, \bibinfo{author}{\bibfnamefont{G.~J.} \bibnamefont{Pryde}}, \bibnamefont{and} \bibinfo{author}{\bibfnamefont{H.~M.} \bibnamefont{Wiseman}}, \bibinfo{journal}{Phys. Rev. A} \textbf{\bibinfo{volume}{83}}, \bibinfo{pages}{052314} (\bibinfo{year}{2011}), \urlprefix\url{https://link.aps.org/doi/10.1103/PhysRevA.83.052314}.
	
	\bibitem[{\citenamefont{Mart\'{\i}nez~Vargas et~al.}(2021)\citenamefont{Mart\'{\i}nez~Vargas, Hirche, Sent\'{\i}s, Skotiniotis, Carrizo, Mu\~noz Tapia, and Calsamiglia}}]{Martines21quantum}
	\bibinfo{author}{\bibfnamefont{E.}~\bibnamefont{Mart\'{\i}nez~Vargas}}, \bibinfo{author}{\bibfnamefont{C.}~\bibnamefont{Hirche}}, \bibinfo{author}{\bibfnamefont{G.}~\bibnamefont{Sent\'{\i}s}}, \bibinfo{author}{\bibfnamefont{M.}~\bibnamefont{Skotiniotis}}, \bibinfo{author}{\bibfnamefont{M.}~\bibnamefont{Carrizo}}, \bibinfo{author}{\bibfnamefont{R.}~\bibnamefont{Mu\~noz Tapia}}, \bibnamefont{and} \bibinfo{author}{\bibfnamefont{J.}~\bibnamefont{Calsamiglia}}, \bibinfo{journal}{Phys. Rev. Lett.} \textbf{\bibinfo{volume}{126}}, \bibinfo{pages}{180502} (\bibinfo{year}{2021}), \urlprefix\url{https://link.aps.org/doi/10.1103/PhysRevLett.126.180502}.
	
	\bibitem[{\citenamefont{Conlon et~al.}(2023{\natexlab{a}})\citenamefont{Conlon, Eilenberger, Lam, and Assad}}]{conlon2023discriminating}
	\bibinfo{author}{\bibfnamefont{L.~O.} \bibnamefont{Conlon}}, \bibinfo{author}{\bibfnamefont{F.}~\bibnamefont{Eilenberger}}, \bibinfo{author}{\bibfnamefont{P.~K.} \bibnamefont{Lam}}, \bibnamefont{and} \bibinfo{author}{\bibfnamefont{S.~M.} \bibnamefont{Assad}}, \bibinfo{journal}{Commun. Phys.} \textbf{\bibinfo{volume}{6}}, \bibinfo{pages}{337} (\bibinfo{year}{2023}{\natexlab{a}}).
	
	\bibitem[{\citenamefont{Tian et~al.}(2024)\citenamefont{Tian, Yan, Hou, Xiang, Li, and Guo}}]{Tian24minimum}
	\bibinfo{author}{\bibfnamefont{B.}~\bibnamefont{Tian}}, \bibinfo{author}{\bibfnamefont{W.-Z.} \bibnamefont{Yan}}, \bibinfo{author}{\bibfnamefont{Z.}~\bibnamefont{Hou}}, \bibinfo{author}{\bibfnamefont{G.-Y.} \bibnamefont{Xiang}}, \bibinfo{author}{\bibfnamefont{C.-F.} \bibnamefont{Li}}, \bibnamefont{and} \bibinfo{author}{\bibfnamefont{G.-C.} \bibnamefont{Guo}}, \bibinfo{journal}{Phys. Rev. Lett.} \textbf{\bibinfo{volume}{132}}, \bibinfo{pages}{110801} (\bibinfo{year}{2024}), \urlprefix\url{https://link.aps.org/doi/10.1103/PhysRevLett.132.110801}.
	
	\bibitem[{\citenamefont{Hou et~al.}(2018)\citenamefont{Hou, Tang, Shang, Zhu, Li, Yuan, Wu, Xiang, Li, and Guo}}]{hou2018deterministic}
	\bibinfo{author}{\bibfnamefont{Z.}~\bibnamefont{Hou}}, \bibinfo{author}{\bibfnamefont{J.-F.} \bibnamefont{Tang}}, \bibinfo{author}{\bibfnamefont{J.}~\bibnamefont{Shang}}, \bibinfo{author}{\bibfnamefont{H.}~\bibnamefont{Zhu}}, \bibinfo{author}{\bibfnamefont{J.}~\bibnamefont{Li}}, \bibinfo{author}{\bibfnamefont{Y.}~\bibnamefont{Yuan}}, \bibinfo{author}{\bibfnamefont{K.-D.} \bibnamefont{Wu}}, \bibinfo{author}{\bibfnamefont{G.-Y.} \bibnamefont{Xiang}}, \bibinfo{author}{\bibfnamefont{C.-F.} \bibnamefont{Li}}, \bibnamefont{and} \bibinfo{author}{\bibfnamefont{G.-C.} \bibnamefont{Guo}}, \bibinfo{journal}{Nat. Commun.} \textbf{\bibinfo{volume}{9}}, \bibinfo{pages}{1414} (\bibinfo{year}{2018}).
	
	\bibitem[{\citenamefont{Tang et~al.}(2020)\citenamefont{Tang, Hou, Shang, Zhu, Xiang, Li, and Guo}}]{experimental20tang}
	\bibinfo{author}{\bibfnamefont{J.-F.} \bibnamefont{Tang}}, \bibinfo{author}{\bibfnamefont{Z.}~\bibnamefont{Hou}}, \bibinfo{author}{\bibfnamefont{J.}~\bibnamefont{Shang}}, \bibinfo{author}{\bibfnamefont{H.}~\bibnamefont{Zhu}}, \bibinfo{author}{\bibfnamefont{G.-Y.} \bibnamefont{Xiang}}, \bibinfo{author}{\bibfnamefont{C.-F.} \bibnamefont{Li}}, \bibnamefont{and} \bibinfo{author}{\bibfnamefont{G.-C.} \bibnamefont{Guo}}, \bibinfo{journal}{Phys. Rev. Lett.} \textbf{\bibinfo{volume}{124}}, \bibinfo{pages}{060502} (\bibinfo{year}{2020}), \urlprefix\url{https://link.aps.org/doi/10.1103/PhysRevLett.124.060502}.
	
	\bibitem[{\citenamefont{Parniak et~al.}(2018)\citenamefont{Parniak, Bor\'owka, Boroszko, Wasilewski, Banaszek, and Demkowicz-Dobrza\ifmmode~\acute{n}\else \'{n}\fi{}ski}}]{Parniak18beating}
	\bibinfo{author}{\bibfnamefont{M.}~\bibnamefont{Parniak}}, \bibinfo{author}{\bibfnamefont{S.}~\bibnamefont{Bor\'owka}}, \bibinfo{author}{\bibfnamefont{K.}~\bibnamefont{Boroszko}}, \bibinfo{author}{\bibfnamefont{W.}~\bibnamefont{Wasilewski}}, \bibinfo{author}{\bibfnamefont{K.}~\bibnamefont{Banaszek}}, \bibnamefont{and} \bibinfo{author}{\bibfnamefont{R.}~\bibnamefont{Demkowicz-Dobrza\ifmmode~\acute{n}\else \'{n}\fi{}ski}}, \bibinfo{journal}{Phys. Rev. Lett.} \textbf{\bibinfo{volume}{121}}, \bibinfo{pages}{250503} (\bibinfo{year}{2018}), \urlprefix\url{https://link.aps.org/doi/10.1103/PhysRevLett.121.250503}.
	
	\bibitem[{\citenamefont{Wu et~al.}(2020)\citenamefont{Wu, B\"aumer, Tang, Hovhannisyan, Perarnau-Llobet, Xiang, Li, and Guo}}]{Wu20minimizing}
	\bibinfo{author}{\bibfnamefont{K.-D.} \bibnamefont{Wu}}, \bibinfo{author}{\bibfnamefont{E.}~\bibnamefont{B\"aumer}}, \bibinfo{author}{\bibfnamefont{J.-F.} \bibnamefont{Tang}}, \bibinfo{author}{\bibfnamefont{K.~V.} \bibnamefont{Hovhannisyan}}, \bibinfo{author}{\bibfnamefont{M.}~\bibnamefont{Perarnau-Llobet}}, \bibinfo{author}{\bibfnamefont{G.-Y.} \bibnamefont{Xiang}}, \bibinfo{author}{\bibfnamefont{C.-F.} \bibnamefont{Li}}, \bibnamefont{and} \bibinfo{author}{\bibfnamefont{G.-C.} \bibnamefont{Guo}}, \bibinfo{journal}{Phys. Rev. Lett.} \textbf{\bibinfo{volume}{125}}, \bibinfo{pages}{210401} (\bibinfo{year}{2020}), \urlprefix\url{https://link.aps.org/doi/10.1103/PhysRevLett.125.210401}.
	
	\bibitem[{\citenamefont{Conlon et~al.}(2023{\natexlab{b}})\citenamefont{Conlon, Vogl, Marciniak, Pogorelov, Yung, Eilenberger, Berry, Santana, Blatt, Monz et~al.}}]{Conlon23approaching}
	\bibinfo{author}{\bibfnamefont{L.~O.} \bibnamefont{Conlon}}, \bibinfo{author}{\bibfnamefont{T.}~\bibnamefont{Vogl}}, \bibinfo{author}{\bibfnamefont{C.~D.} \bibnamefont{Marciniak}}, \bibinfo{author}{\bibfnamefont{I.}~\bibnamefont{Pogorelov}}, \bibinfo{author}{\bibfnamefont{S.~K.} \bibnamefont{Yung}}, \bibinfo{author}{\bibfnamefont{F.}~\bibnamefont{Eilenberger}}, \bibinfo{author}{\bibfnamefont{D.~W.} \bibnamefont{Berry}}, \bibinfo{author}{\bibfnamefont{F.~S.} \bibnamefont{Santana}}, \bibinfo{author}{\bibfnamefont{R.}~\bibnamefont{Blatt}}, \bibinfo{author}{\bibfnamefont{T.}~\bibnamefont{Monz}}, \bibnamefont{et~al.}, \bibinfo{journal}{Nat. Phys.} \textbf{\bibinfo{volume}{19}}, \bibinfo{pages}{351} (\bibinfo{year}{2023}{\natexlab{b}}).
	
	\bibitem[{\citenamefont{Greenberger et~al.}(1989)\citenamefont{Greenberger, Horne, and Zeilinger}}]{greenberger1989going}
	\bibinfo{author}{\bibfnamefont{D.~M.} \bibnamefont{Greenberger}}, \bibinfo{author}{\bibfnamefont{M.~A.} \bibnamefont{Horne}}, \bibnamefont{and} \bibinfo{author}{\bibfnamefont{A.}~\bibnamefont{Zeilinger}}, in \emph{\bibinfo{booktitle}{Bell’s Theorem, Quantum Theory and Conceptions of the Universe}} (\bibinfo{publisher}{Springer}, \bibinfo{address}{New York}, \bibinfo{year}{1989}), pp. \bibinfo{pages}{69--72}.
	
	\bibitem[{\citenamefont{Greenberger et~al.}(1990)\citenamefont{Greenberger, Horne, Shimony, and Zeilinger}}]{greenberger1990bell}
	\bibinfo{author}{\bibfnamefont{D.~M.} \bibnamefont{Greenberger}}, \bibinfo{author}{\bibfnamefont{M.~A.} \bibnamefont{Horne}}, \bibinfo{author}{\bibfnamefont{A.}~\bibnamefont{Shimony}}, \bibnamefont{and} \bibinfo{author}{\bibfnamefont{A.}~\bibnamefont{Zeilinger}}, \bibinfo{journal}{Am. J. Phys.} \textbf{\bibinfo{volume}{58}}, \bibinfo{pages}{1131} (\bibinfo{year}{1990}).
	
	\bibitem[{\citenamefont{Gühne and Tóth}(2009)}]{GUHNE2009entanglement}
	\bibinfo{author}{\bibfnamefont{O.}~\bibnamefont{Gühne}} \bibnamefont{and} \bibinfo{author}{\bibfnamefont{G.}~\bibnamefont{Tóth}}, \bibinfo{journal}{Phys. Rep.} \textbf{\bibinfo{volume}{474}}, \bibinfo{pages}{1} (\bibinfo{year}{2009}), \urlprefix\url{https://www.sciencedirect.com/science/article/pii/S0370157309000623}.
	
	\bibitem[{\citenamefont{Xie and Eberly}(2021)}]{xie21triangle}
	\bibinfo{author}{\bibfnamefont{S.}~\bibnamefont{Xie}} \bibnamefont{and} \bibinfo{author}{\bibfnamefont{J.~H.} \bibnamefont{Eberly}}, \bibinfo{journal}{Phys. Rev. Lett.} \textbf{\bibinfo{volume}{127}}, \bibinfo{pages}{040403} (\bibinfo{year}{2021}), \urlprefix\url{https://link.aps.org/doi/10.1103/PhysRevLett.127.040403}.
	
	\bibitem[{\citenamefont{Pan et~al.}(2000)\citenamefont{Pan, Bouwmeester, Daniell, Weinfurter, and Zeilinger}}]{pan2000experimental}
	\bibinfo{author}{\bibfnamefont{J.-W.} \bibnamefont{Pan}}, \bibinfo{author}{\bibfnamefont{D.}~\bibnamefont{Bouwmeester}}, \bibinfo{author}{\bibfnamefont{M.}~\bibnamefont{Daniell}}, \bibinfo{author}{\bibfnamefont{H.}~\bibnamefont{Weinfurter}}, \bibnamefont{and} \bibinfo{author}{\bibfnamefont{A.}~\bibnamefont{Zeilinger}}, \bibinfo{journal}{Nature} \textbf{\bibinfo{volume}{403}}, \bibinfo{pages}{515} (\bibinfo{year}{2000}).
	
	\bibitem[{\citenamefont{Contreras-Tejada et~al.}(2021)\citenamefont{Contreras-Tejada, Palazuelos, and de~Vicente}}]{contreras21genuine}
	\bibinfo{author}{\bibfnamefont{P.}~\bibnamefont{Contreras-Tejada}}, \bibinfo{author}{\bibfnamefont{C.}~\bibnamefont{Palazuelos}}, \bibnamefont{and} \bibinfo{author}{\bibfnamefont{J.~I.} \bibnamefont{de~Vicente}}, \bibinfo{journal}{Phys. Rev. Lett.} \textbf{\bibinfo{volume}{126}}, \bibinfo{pages}{040501} (\bibinfo{year}{2021}), \urlprefix\url{https://link.aps.org/doi/10.1103/PhysRevLett.126.040501}.
	
	\bibitem[{\citenamefont{Das et~al.}(2021)\citenamefont{Das, B\"auml, Winczewski, and Horodecki}}]{Das21universal}
	\bibinfo{author}{\bibfnamefont{S.}~\bibnamefont{Das}}, \bibinfo{author}{\bibfnamefont{S.}~\bibnamefont{B\"auml}}, \bibinfo{author}{\bibfnamefont{M.}~\bibnamefont{Winczewski}}, \bibnamefont{and} \bibinfo{author}{\bibfnamefont{K.}~\bibnamefont{Horodecki}}, \bibinfo{journal}{Phys. Rev. X} \textbf{\bibinfo{volume}{11}}, \bibinfo{pages}{041016} (\bibinfo{year}{2021}), \urlprefix\url{https://link.aps.org/doi/10.1103/PhysRevX.11.041016}.
	
	\bibitem[{\citenamefont{Hyllus et~al.}(2012)\citenamefont{Hyllus, Laskowski, Krischek, Schwemmer, Wieczorek, Weinfurter, Pezz\'e, and Smerzi}}]{hyllus12fisher}
	\bibinfo{author}{\bibfnamefont{P.}~\bibnamefont{Hyllus}}, \bibinfo{author}{\bibfnamefont{W.}~\bibnamefont{Laskowski}}, \bibinfo{author}{\bibfnamefont{R.}~\bibnamefont{Krischek}}, \bibinfo{author}{\bibfnamefont{C.}~\bibnamefont{Schwemmer}}, \bibinfo{author}{\bibfnamefont{W.}~\bibnamefont{Wieczorek}}, \bibinfo{author}{\bibfnamefont{H.}~\bibnamefont{Weinfurter}}, \bibinfo{author}{\bibfnamefont{L.}~\bibnamefont{Pezz\'e}}, \bibnamefont{and} \bibinfo{author}{\bibfnamefont{A.}~\bibnamefont{Smerzi}}, \bibinfo{journal}{Phys. Rev. A} \textbf{\bibinfo{volume}{85}}, \bibinfo{pages}{022321} (\bibinfo{year}{2012}), \urlprefix\url{https://link.aps.org/doi/10.1103/PhysRevA.85.022321}.
	
	\bibitem[{\citenamefont{T\'oth}(2012)}]{Toth12multipartite}
	\bibinfo{author}{\bibfnamefont{G.}~\bibnamefont{T\'oth}}, \bibinfo{journal}{Phys. Rev. A} \textbf{\bibinfo{volume}{85}}, \bibinfo{pages}{022322} (\bibinfo{year}{2012}), \urlprefix\url{https://link.aps.org/doi/10.1103/PhysRevA.85.022322}.
	
	\bibitem[{\citenamefont{Bagan et~al.}(2005)\citenamefont{Bagan, Monras, and Mu\~noz Tapia}}]{PhysRevA.71.062318}
	\bibinfo{author}{\bibfnamefont{E.}~\bibnamefont{Bagan}}, \bibinfo{author}{\bibfnamefont{A.}~\bibnamefont{Monras}}, \bibnamefont{and} \bibinfo{author}{\bibfnamefont{R.}~\bibnamefont{Mu\~noz Tapia}}, \bibinfo{journal}{Phys. Rev. A} \textbf{\bibinfo{volume}{71}}, \bibinfo{pages}{062318} (\bibinfo{year}{2005}), \urlprefix\url{https://link.aps.org/doi/10.1103/PhysRevA.71.062318}.
	
	\bibitem[{\citenamefont{Latorre et~al.}(1998)\citenamefont{Latorre, Pascual, and Tarrach}}]{PhysRevLett.81.1351}
	\bibinfo{author}{\bibfnamefont{J.~I.} \bibnamefont{Latorre}}, \bibinfo{author}{\bibfnamefont{P.}~\bibnamefont{Pascual}}, \bibnamefont{and} \bibinfo{author}{\bibfnamefont{R.}~\bibnamefont{Tarrach}}, \bibinfo{journal}{Phys. Rev. Lett.} \textbf{\bibinfo{volume}{81}}, \bibinfo{pages}{1351} (\bibinfo{year}{1998}), \urlprefix\url{https://link.aps.org/doi/10.1103/PhysRevLett.81.1351}.
	
	\bibitem[{\citenamefont{Martens and de~Muynck}(1990)}]{MartM90}
	\bibinfo{author}{\bibfnamefont{H.}~\bibnamefont{Martens}} \bibnamefont{and} \bibinfo{author}{\bibfnamefont{W.~M.} \bibnamefont{de~Muynck}}, \bibinfo{journal}{Found. Phys.} \textbf{\bibinfo{volume}{20}}, \bibinfo{pages}{255} (\bibinfo{year}{1990}).
	
	\bibitem[{\citenamefont{Zhu}(2022)}]{Zhu22}
	\bibinfo{author}{\bibfnamefont{H.}~\bibnamefont{Zhu}}, \bibinfo{journal}{PRX Quantum} \textbf{\bibinfo{volume}{3}}, \bibinfo{pages}{030306} (\bibinfo{year}{2022}).
	
	\bibitem[{sup()}]{supplementary}
	\bibinfo{note}{See Supplemental Material for more details about biseparable measurements and genuine collective measurements as well as experimental realization.}
	
	\bibitem[{\citenamefont{Yi et~al.}(2025)\citenamefont{Yi, Zhou, Hou, Xiang, and Zhu}}]{YiZHX24}
	\bibinfo{author}{\bibfnamefont{C.}~\bibnamefont{Yi}}, \bibinfo{author}{\bibfnamefont{K.}~\bibnamefont{Zhou}}, \bibinfo{author}{\bibfnamefont{Z.}~\bibnamefont{Hou}}, \bibinfo{author}{\bibfnamefont{G.-Y.} \bibnamefont{Xiang}}, \bibnamefont{and} \bibinfo{author}{\bibfnamefont{H.}~\bibnamefont{Zhu}}, \bibinfo{journal}{Phys. Rev. A} \textbf{\bibinfo{volume}{111}}, \bibinfo{pages}{062409} (\bibinfo{year}{2025}), \urlprefix\url{https://link.aps.org/doi/10.1103/PhysRevA.111.062409}.
	
	\bibitem[{\citenamefont{Kurzy\ifmmode~\acute{n}\else \'{n}\fi{}ski and W\'ojcik}(2013)}]{KurzW13}
	\bibinfo{author}{\bibfnamefont{P.}~\bibnamefont{Kurzy\ifmmode~\acute{n}\else \'{n}\fi{}ski}} \bibnamefont{and} \bibinfo{author}{\bibfnamefont{A.}~\bibnamefont{W\'ojcik}}, \bibinfo{journal}{Phys. Rev. Lett.} \textbf{\bibinfo{volume}{110}}, \bibinfo{pages}{200404} (\bibinfo{year}{2013}).
	
	\bibitem[{\citenamefont{Li et~al.}(2019)\citenamefont{Li, Zhang, and Zhu}}]{li19implementation}
	\bibinfo{author}{\bibfnamefont{Z.}~\bibnamefont{Li}}, \bibinfo{author}{\bibfnamefont{H.}~\bibnamefont{Zhang}}, \bibnamefont{and} \bibinfo{author}{\bibfnamefont{H.}~\bibnamefont{Zhu}}, \bibinfo{journal}{Phys. Rev. A} \textbf{\bibinfo{volume}{99}}, \bibinfo{pages}{062342} (\bibinfo{year}{2019}), \urlprefix\url{https://link.aps.org/doi/10.1103/PhysRevA.99.062342}.
	
	\bibitem[{\citenamefont{Wang et~al.}(2023)\citenamefont{Wang, Zhan, Li, Xiao, Zhu, Qu, Lin, Yu, and Xue}}]{Wang23generalized}
	\bibinfo{author}{\bibfnamefont{X.}~\bibnamefont{Wang}}, \bibinfo{author}{\bibfnamefont{X.}~\bibnamefont{Zhan}}, \bibinfo{author}{\bibfnamefont{Y.}~\bibnamefont{Li}}, \bibinfo{author}{\bibfnamefont{L.}~\bibnamefont{Xiao}}, \bibinfo{author}{\bibfnamefont{G.}~\bibnamefont{Zhu}}, \bibinfo{author}{\bibfnamefont{D.}~\bibnamefont{Qu}}, \bibinfo{author}{\bibfnamefont{Q.}~\bibnamefont{Lin}}, \bibinfo{author}{\bibfnamefont{Y.}~\bibnamefont{Yu}}, \bibnamefont{and} \bibinfo{author}{\bibfnamefont{P.}~\bibnamefont{Xue}}, \bibinfo{journal}{Phys. Rev. Lett.} \textbf{\bibinfo{volume}{131}}, \bibinfo{pages}{150803} (\bibinfo{year}{2023}), \urlprefix\url{https://link.aps.org/doi/10.1103/PhysRevLett.131.150803}.
	
	\bibitem[{\citenamefont{Bian et~al.}(2015)\citenamefont{Bian, Li, Qin, Zhan, Zhang, Sanders, and Xue}}]{BianLQZ15}
	\bibinfo{author}{\bibfnamefont{Z.}~\bibnamefont{Bian}}, \bibinfo{author}{\bibfnamefont{J.}~\bibnamefont{Li}}, \bibinfo{author}{\bibfnamefont{H.}~\bibnamefont{Qin}}, \bibinfo{author}{\bibfnamefont{X.}~\bibnamefont{Zhan}}, \bibinfo{author}{\bibfnamefont{R.}~\bibnamefont{Zhang}}, \bibinfo{author}{\bibfnamefont{B.~C.} \bibnamefont{Sanders}}, \bibnamefont{and} \bibinfo{author}{\bibfnamefont{P.}~\bibnamefont{Xue}}, \bibinfo{journal}{Phys. Rev. Lett.} \textbf{\bibinfo{volume}{114}}, \bibinfo{pages}{203602} (\bibinfo{year}{2015}).
	
	\bibitem[{\citenamefont{Zhao et~al.}(2015)\citenamefont{Zhao, Yu, Kurzy{\'n}ski, Xiang, Li, and Guo}}]{ZhaoYKX15}
	\bibinfo{author}{\bibfnamefont{Y.-Y.} \bibnamefont{Zhao}}, \bibinfo{author}{\bibfnamefont{N.-K.} \bibnamefont{Yu}}, \bibinfo{author}{\bibfnamefont{P.}~\bibnamefont{Kurzy{\'n}ski}}, \bibinfo{author}{\bibfnamefont{G.-Y.} \bibnamefont{Xiang}}, \bibinfo{author}{\bibfnamefont{C.-F.} \bibnamefont{Li}}, \bibnamefont{and} \bibinfo{author}{\bibfnamefont{G.-C.} \bibnamefont{Guo}}, \bibinfo{journal}{Phys. Rev. A} \textbf{\bibinfo{volume}{91}}, \bibinfo{pages}{042101} (\bibinfo{year}{2015}).
	
	\bibitem[{\citenamefont{Yuan et~al.}(2020)\citenamefont{Yuan, Hou, Tang, Streltsov, Xiang, Li, and Guo}}]{yuan2020direct}
	\bibinfo{author}{\bibfnamefont{Y.}~\bibnamefont{Yuan}}, \bibinfo{author}{\bibfnamefont{Z.}~\bibnamefont{Hou}}, \bibinfo{author}{\bibfnamefont{J.-F.} \bibnamefont{Tang}}, \bibinfo{author}{\bibfnamefont{A.}~\bibnamefont{Streltsov}}, \bibinfo{author}{\bibfnamefont{G.-Y.} \bibnamefont{Xiang}}, \bibinfo{author}{\bibfnamefont{C.-F.} \bibnamefont{Li}}, \bibnamefont{and} \bibinfo{author}{\bibfnamefont{G.-C.} \bibnamefont{Guo}}, \bibinfo{journal}{npj Quantum Inf.} \textbf{\bibinfo{volume}{6}}, \bibinfo{pages}{46} (\bibinfo{year}{2020}).
	
	\bibitem[{\citenamefont{Arrazola et~al.}(2021)\citenamefont{Arrazola, Bergholm, Br{\'a}dler, Bromley, Collins, Dhand, Fumagalli, Gerrits, Goussev, Helt et~al.}}]{arrazola2021quantum}
	\bibinfo{author}{\bibfnamefont{J.~M.} \bibnamefont{Arrazola}}, \bibinfo{author}{\bibfnamefont{V.}~\bibnamefont{Bergholm}}, \bibinfo{author}{\bibfnamefont{K.}~\bibnamefont{Br{\'a}dler}}, \bibinfo{author}{\bibfnamefont{T.~R.} \bibnamefont{Bromley}}, \bibinfo{author}{\bibfnamefont{M.~J.} \bibnamefont{Collins}}, \bibinfo{author}{\bibfnamefont{I.}~\bibnamefont{Dhand}}, \bibinfo{author}{\bibfnamefont{A.}~\bibnamefont{Fumagalli}}, \bibinfo{author}{\bibfnamefont{T.}~\bibnamefont{Gerrits}}, \bibinfo{author}{\bibfnamefont{A.}~\bibnamefont{Goussev}}, \bibinfo{author}{\bibfnamefont{L.~G.} \bibnamefont{Helt}}, \bibnamefont{et~al.}, \bibinfo{journal}{Nature} \textbf{\bibinfo{volume}{591}}, \bibinfo{pages}{54} (\bibinfo{year}{2021}).
	
	\bibitem[{\citenamefont{Maring et~al.}(2024)\citenamefont{Maring, Fyrillas, Pont, Ivanov, Stepanov, Margaria, Hease, Pishchagin, Lema{\^\i}tre, Sagnes et~al.}}]{maring2024versatile}
	\bibinfo{author}{\bibfnamefont{N.}~\bibnamefont{Maring}}, \bibinfo{author}{\bibfnamefont{A.}~\bibnamefont{Fyrillas}}, \bibinfo{author}{\bibfnamefont{M.}~\bibnamefont{Pont}}, \bibinfo{author}{\bibfnamefont{E.}~\bibnamefont{Ivanov}}, \bibinfo{author}{\bibfnamefont{P.}~\bibnamefont{Stepanov}}, \bibinfo{author}{\bibfnamefont{N.}~\bibnamefont{Margaria}}, \bibinfo{author}{\bibfnamefont{W.}~\bibnamefont{Hease}}, \bibinfo{author}{\bibfnamefont{A.}~\bibnamefont{Pishchagin}}, \bibinfo{author}{\bibfnamefont{A.}~\bibnamefont{Lema{\^\i}tre}}, \bibinfo{author}{\bibfnamefont{I.}~\bibnamefont{Sagnes}}, \bibnamefont{et~al.}, \bibinfo{journal}{Nat. Photonics} \textbf{\bibinfo{volume}{18}}, \bibinfo{pages}{603} (\bibinfo{year}{2024}).
	
\end{thebibliography}
\end{document}